\def \be {\begin{equation}}
\def \ee {\end{equation}}
\def \bea {\begin{eqnarray}}
\def \eea {\end{eqnarray}}
\def \u{\underline}
\title{A third law of black hole mechanics for supersymmetric black holes and a quasi-local mass-charge inequality}
\def\be{\begin{equation}}
\def\ee{\end{equation}}
\def\ba{\begin{eqnarray}}
\def\ea{\end{eqnarray}}
\newtheorem{definition}{Definition}[section]
\newtheorem{theorem}{Theorem}[section]
\newtheorem{corollary}{Corollary}[theorem]
\newtheorem{lemma}[theorem]{Lemma}
\author{Harvey S. Reall \\ {\footnotesize Department of Applied Mathematics and Theoretical Physics, University of Cambridge} \\ {\footnotesize Wilberforce Road, Cambridge CB3 0WA, United Kingdom}\\ {\footnotesize hsr1000@cam.ac.uk}}
\begin{document}

\maketitle

\begin{abstract}
It has recently been proved that a third law of black hole mechanics does not hold for Einstein-Maxwell theory coupled to a massless charged scalar field: there exist solutions that describe gravitational collapse to form an exactly extremal Reissner-Nordstr\"om black hole in finite time. In this paper it is proved that such solutions do not exist in theories with matter fields satisfying a local mass-charge inequality. In such a theory, if a 2-surface has the same metric, extrinsic curvature, and Maxwell field as a cross-section of an extremal Reissner-Nordstr\"om horizon then this surface cannot have a compact interior and so cannot be a horizon cross-section of a black hole formed in gravitational collapse. This result is proved using spinorial techniques, which are also used to prove a mass-charge inequality for a modified version of the Dougan-Mason quasi-local mass.
\end{abstract}

\section{Introduction}

In recent work, Kehle and Unger \cite{Kehle:2022uvc} have proved that there exist solutions of Einstein-Maxwell theory coupled to a massless charged scalar field that describe spherically symmetric gravitational collapse to form a black hole that is {\it exactly} extremal Reissner-Nordstr\"om black hole after a finite advanced time. They also prove existence of solutions of this type for which there is an earlier period of advanced time during which the solution is {\it exactly} Schwarzschild on and inside the event horizon, an example of a non-extremal black hole that becomes extremal in finite time. See Fig. \ref{fig:KU}. The degree of differentiability can be made arbitrarily high, in contrast to earlier examples involving thin shells of matter \cite{israel:1967,kuchar:1968,Boulware:1973tlq}. This work proves that a ``third law of black hole mechanics'' \cite{Bardeen:1973gs} does not hold in this theory, contradicting an earlier claim \cite{Israel:1986gqz}. Kehle and Unger have proved similar results with the massless charged scalar replaced by charged Vlasov matter with small mass parameter \cite{Kehle:2024vyt}.

\begin{figure}
\centering
\includegraphics[width=8.0cm]{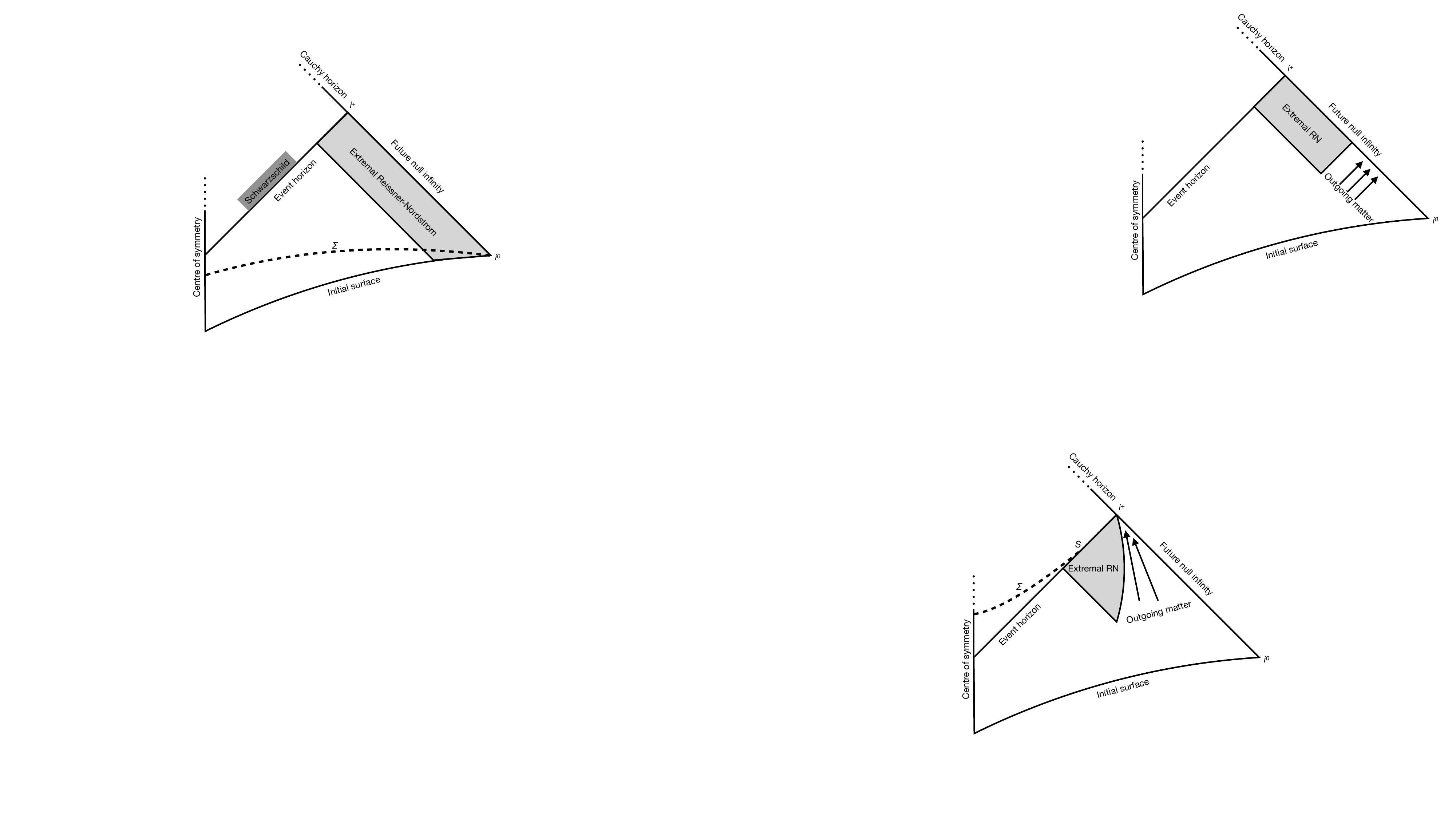}
\caption{Penrose diagram for the solutions constructed by Kehle and Unger \cite{Kehle:2022uvc} describing the formation of an exactly extremal Reissner-Nordstr\"om black hole in spherically symmetric gravitational collapse of a massless charged scalar field. The asymptotically flat initial surface describes the spacetime before collapse. The metric is exactly Schwarzschild in the dark shaded region.}
\label{fig:KU}
\end{figure}

\begin{figure}
\begin{subfigure}{0.5\textwidth}
\centering
\includegraphics[width=8.0cm]{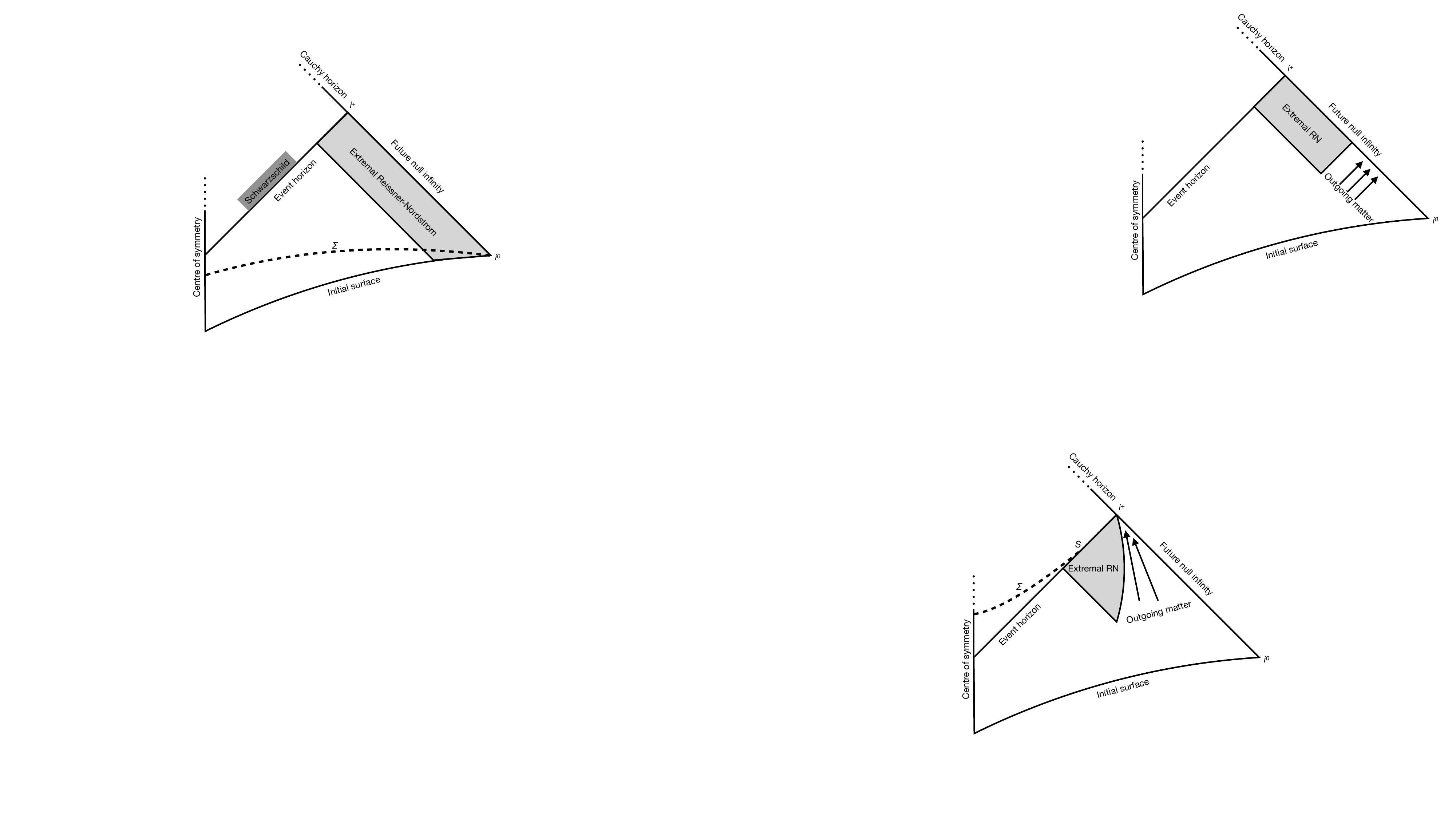}
\caption{}
\end{subfigure}
\begin{subfigure}{0.5\textwidth}
\centering
\includegraphics[width=8.0cm]{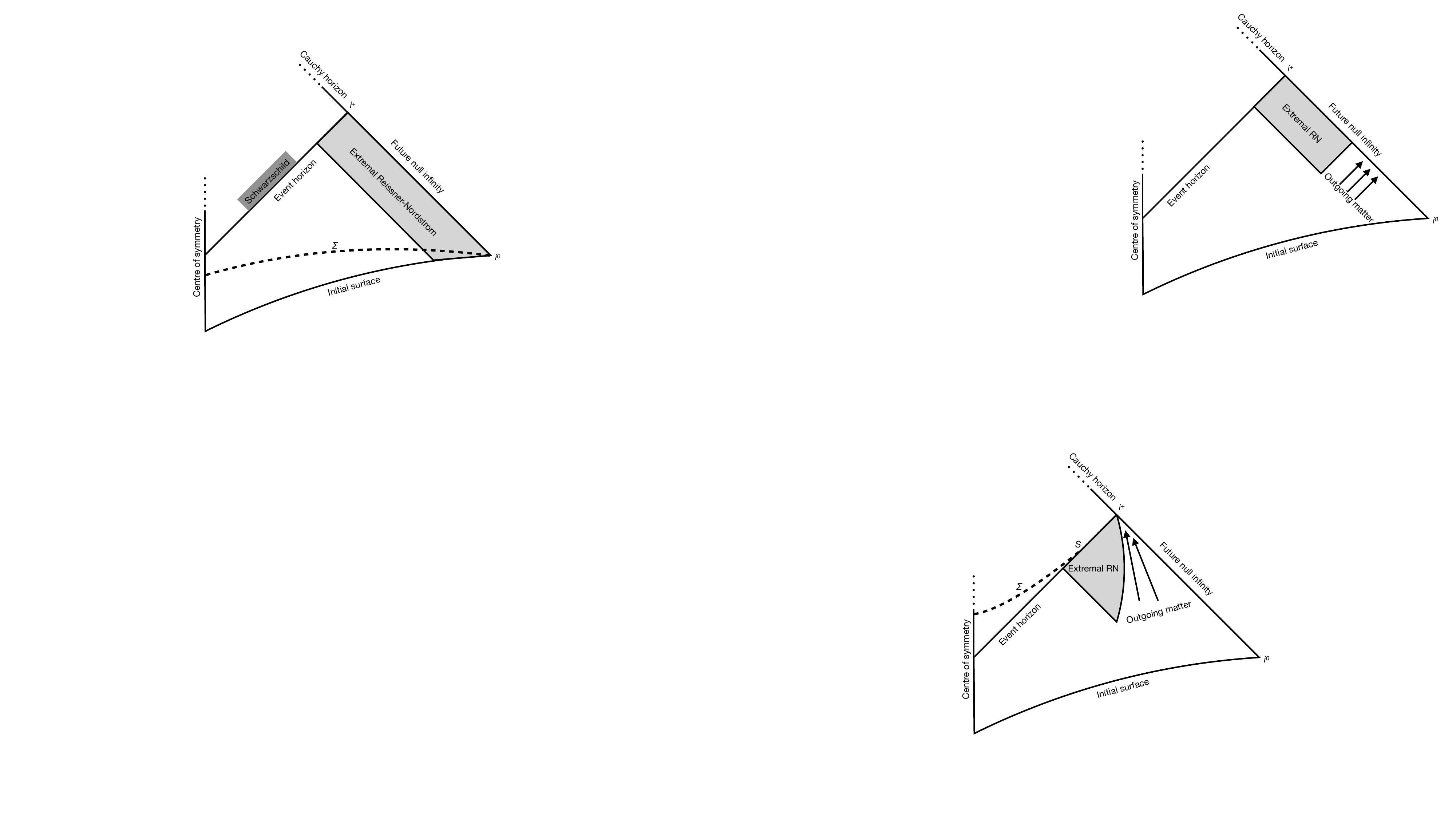}
\caption{}
\end{subfigure}
\caption{(a) A spacetime describing formation of an extremal Reissner-Nordstr\"om black hole inside an expanding bubble of massless charged scalar field. This spacetime has $M>|Q|$ at spatial infinity. (b) In a theory with both positively and negatively charged massive Vlasov matter, gravitational collapse might produce a positively charged extremal Reissner-Nordstr\"om black hole inside an expanding bubble, with negatively charged matter outside the bubble. This spacetime has $M>|Q|$ everywhere at infinity. The horizon cross-section $S$ is the boundary of a compact spacelike surface $\Sigma$.}
\label{fig:penrose}
\end{figure}

These results hold for theories where the matter has a large charge to mass ratio. One might expect it to be harder to form an extremal Reissner-Nordstr\"om black hole in a theory with matter obeying an upper bound on its charge to mass ratio. For example, if matter fields satisfy a certain local mass-charge inequality (a strengthened version of the dominant energy condition) then a spacetime that is smooth on a spacelike Cauchy surface $\Sigma$ as in Fig. \ref{fig:KU} must satisfy $M \ge \sqrt{Q^2+P^2}$, where $M$, $Q$ and $P$ are the mass, electric charge  and magnetic charge measured at spatial infinity, with equality if, and only if, the spacetime is {\it supersymmetric}, i.e., it admits a ``supercovariantly constant'' spinor \cite{Gibbons:1982fy,Gibbons:1982jg}. For such matter, this excludes spacetimes similar to those of Kehle and Unger, which have $M=|Q|$ (and $P=0$), because the existence of a supercovariantly constant spinor implies that the spacetime admits a causal Killing vector \cite{Gibbons:1982fy}, so it cannot describe gravitational collapse. 

This argument excludes (for suitable matter) the existence of solutions describing gravitational collapse to form a spacetime that is exactly extremal Reissner-Nordstr\"om all the way out to spatial infinity. However, it does not exclude the existence of solutions describing gravitational collapse to form a black hole that is exactly extremal Reissner-Nordstr\"om only near the horizon, but not near spatial infinity, which would still be a violation of the third law. 
For example, spherically symmetric gravitational collapse of a massless charged scalar might produce a spacetime with Penrose diagram as shown in Fig. \ref{fig:penrose}(a). After a finite advanced time, the spacetime is exactly extremal Reissner-Nordstr\"om inside an expanding ``bubble'' region. In the region outside the bubble there is a purely outgoing flux of scalar field (with charge opposite to that of the hole), the solution is not Reissner-Nordstr\"om, and $M>|Q|$ at infinity.  For the types of matter considered by Kehle and Unger one could construct such solutions by ``solving backwards in time'' from data prescribed on future null infinity and the future event horizon \cite{ryan_email}. This kind of spacetime requires massless charged matter but one can envisage the existence of similar solutions with massive charged Vlasov matter, as shown in Fig. \ref{fig:penrose}(b). More generally, there might exist solutions that, after a finite advanced time, are exactly extremal Reissner-Nordstr\"om on the event horizon but differ from extremal Reissner-Nordstr\"om immediately outside the horizon. Could such solutions exist in a theory with matter satisfying the local mass-charge inequality? They are not excluded by the argument of the previous paragraph because they have $M>|Q|$. Furthermore, since such matter is massive it cannot reach future null infinity and so one cannot simply reformulate the argument of the previous paragraph in terms of the mass and charge at null infinity. 

The aim of this paper is to prove that a third law of black hole mechanics does hold for {\it supersymmetric} black hole solutions of Einstein-Maxwell theory coupled to charged matter obeying the local mass-charge inequality of \cite{Gibbons:1982fy,Gibbons:1982jg}. A supersymmetric black hole is one admitting a non-trivial supercovariantly constant spinor, e.g. extremal Reissner-Nordstr\"om. The third law-violating spacetimes discussed in the previous paragraph are not supersymmetric but, at late time, the horizon is ``the same as'' the horizon of a supersymmetric black hole. We will prove this is impossible in theories obeying the local mass-charge inequality: if a black hole solution admits a horizon cross-section $S$ on which the (spacetime) metric, extrinsic curvature and Maxwell field coincide with those on a cross-section of the horizon of a supersymmetric black hole then $S$ {\it cannot have compact interior}. More precisely, there does not exist a smooth compact spacelike surface $\Sigma$ with $S = \partial \Sigma$ (see Fig. \ref{fig:penrose}(b)). Since a black hole formed in gravitational collapse {\it would} have compact interior, this proves that one cannot form a supersymmetric black hole in gravitational collapse in this class of theories. 

Our proof is based on ideas of Dougan and Mason \cite{Dougan:1991zz}, who introduced spinorial definitions of {\it quasilocal} $4$-momentum and mass for a $2$-surface $S=\partial \Sigma$, and proved that their definition satisfies a positive energy theorem. A supercovariant generalization of their methods enables our version of the third law to be proved fairly easily. The proof does not require the introduction of a quasilocal $4$-momentum or energy. However, for completeness, we go on to explain how to define notions of quasilocal $4$-momentum and mass $\varpi$ associated with $S$ and show that $\varpi \ge \sqrt{Q^2 + P^2}$ (where $Q$ and $P$ are the total electric and magnetic charges on $\Sigma$) with equality only if there exists a supercovariantly constant spinor on $\Sigma$. The mass $\varpi$ is a simple modification of the quasilocal mass defined by Dougan and Mason and agrees with the ADM mass when evaluated for a symmetry sphere in the Reissner-Nordstr\"om spacetime.  

This paper is organized as follows. In section \ref{sec:third} we present our proof of a third law for supersymmetric black holes. The main result is presented in Theorem \ref{thm:spinor} and Corollary \ref{cor:3rdlaw}. Section \ref{sec:quasilocal} describes our modification of the Dougan-Mason quasilocal $4$-momentum and mass. The main result is the mass-charge inequality established inTheorem \ref{thm:masscharge}. Section \ref{sec:discuss} contains a discussion of our results.  

\subsection*{Notation and conventions}

We set Newton's constant $G=1$. Lower case Latin indices denote abstract tensor indices. Upper case Latin indices are 2-component spinor indices. The use of 2-component spinors makes it convenient to follow the conventions of \cite{Penrose:1985bww}. In particular we shall use a negative signature metric. 
 
 \section{Third law}

\label{sec:third}

\subsection{Preliminaries}

We shall consider Einstein-Maxwell theory coupled to charged matter in four spacetime dimensions. The matter has energy-momentum tensor $T^{(m)}_{ab}$, electric current $J_a$ and magnetic current $\tilde{J}_a$. The Einstein equation is
\be
\label{einstein}
 R_{ab} - \frac{1}{2} R g_{ab} =   8 
 \pi  \left( T_{ab}^{(Max)} + T_{ab}^{(m)} \right)
\ee
where (recall we use a negative signature metric)
\be
  T_{ab}^{(Max)} =- \frac{1}{4\pi}  \left( F_a{}^c F_{bc} - \frac{1}{4} F_{cd}F^{cd} g_{ab} \right)
\ee
and the Maxwell equations are
\be
 d\star F= 4\pi \star J \qquad \qquad dF = 4\pi \star \tilde{J}
\ee
The matter is said to satisfy the {\it local mass-charge inequality} if the following strengthened version of the dominant energy condition holds w.r.t. any orthonormal basis $\{e^a_\mu\}$ \cite{Gibbons:1982fy,Gibbons:1982jg}: 
\be
 T^{(m)}_{00} \ge \sqrt{T^{(m)}_{0i}T^{(m)}_{0i} + J_0^2 + \tilde{J}_0^2}
\ee
If we set $e_0^a = Z^a$ then this is equivalent to
\be
\label{local_mass_charge}
 T_{ab}^{(m)} Z^a Z^b \ge 0 \qquad \qquad T^{(m)}_{ab} Z^b T^{(m)a}{}_c Z^c \ge (J_a Z^a)^2 + (\tilde{J}_a Z^a)^2
\ee
and, by rescaling, this must hold for any timelike $Z^a$ and by continuity also for null $Z^a$. Hence the matter satisfies the local mass-charge inequality if, and only if, \eqref{local_mass_charge} holds for any causal vector $Z^a$. We have not included a cosmological constant in \eqref{einstein} but a positive cosmological constant could be accommodated by absorbing it into $T_{ab}^{(m)}$, which will still satisfy \eqref{local_mass_charge}. 

The electric and magnetic charges enclosed by a $2$-surface $S$ are defined by
\be
\label{QPdef}
 Q + iP = \frac{1}{4\pi} \int_S (\star F + i F) 
\ee
The supercovariant derivative acting on a Dirac spinor $\epsilon$ is defined by \cite{Gibbons:1982fy}
\be
 \hat{\nabla}_a \epsilon = \nabla_a \epsilon + \frac{1}{4} F_{bc} \gamma^b \gamma^c \gamma_a \epsilon
\ee
where $\gamma^a$ are the Dirac gamma matrices satisfying $\{ \gamma^a ,\gamma^b\} = 2g^{ab}$ with $\gamma^0$ hermitian and $\gamma^{1,2,3}$ anti-hermitian.  A Dirac spinor can be decomposed into a pair of 2-component spinors as $\epsilon=(\bar{\lambda}^A,\mu_{A'})$. The supercovariant derivative decomposes as \cite{Gibbons:1982jg,Tod:1983pm,Chrusciel:2005ve},
\be
\label{supercov}
\hat{\nabla}_{AA'} \lambda_{B'} \equiv \nabla_{AA'} \lambda_{B'} + \sqrt{2} \bar{\phi}_{A'B'} \bar{\mu}_A \qquad \qquad
 \hat{\nabla}_{AA'} \mu_{B'} \equiv  \nabla_{AA'} \mu_{B'} - \sqrt{2} \bar{\phi}_{A'B'} \bar{\lambda}_A
\ee
where $\phi_{AB}$ is the symmetric spinor describing the Maxwell field:
\be
 F_{ab} = \phi_{AB} \epsilon_{A'B'} + \bar{\phi}_{A'B'} \bar{\epsilon}_{AB}
\ee
From a spinor $\epsilon$ we can define a vector and a complex scalar \cite{Tod:1983pm} (we adopt the normalization of \cite{Chrusciel:2005ve})
\be
 \label{XVdef}
 X^a = \frac{1}{\sqrt{2}} \left( \bar{\lambda}^A \lambda^{A'} + \bar{\mu}^A \mu^{A'} \right) \qquad V = \bar{\lambda}_A \bar{\mu}^A
\ee
which in Dirac notation correspond to the vector $\bar{\epsilon} \gamma^a \epsilon$ and two real scalars $\bar{\epsilon} \epsilon$, $\bar{\epsilon} \gamma_5 \epsilon$. These quantities satisfy
\be
\label{Xnorm}
 X_a X^a = V \bar{V} 
\ee
and so $X^a$ is non-spacelike (and vanishes iff $\epsilon$ vanishes). 

Throughout this paper, we shall assume that the spacetime and Maxwell field are smooth. This is simply for convenience, and it should be straightforward to adapt our results to non-smooth spacetimes which exceed some minimum level of differentiability. This point is worth emphasizing since the scalar field examples of Kehle and Unger are not smooth \cite{Kehle:2022uvc} (their Vlasov examples {\it are} smooth \cite{Kehle:2024vyt}). However, their method can be used to construct third-law-violating spacetimes of any desired degree of differentiability. The finitely differentiable version of our results would apply to spacetimes with the same level of differentiability as ``most'' of these examples.  

\subsection{Supersymmetric surfaces}

The extremal Reissner-Nordstr\"om solution (or an analytic extension of it) is {\it supersymmetric}: it admits a non-trivial globally defined supercovariantly constant spinor field. A spacetimes of the form shown in Fig. \ref{fig:KU} or \ref{fig:penrose} is not supersymmetric: a supercovariantly constant spinor exists in the extremal Reissner-Nordstr\"om region but it cannot be smoothly extended to the entire spacetime. A late-time cross-section of the horizon of such a black hole would be an example of a {\it supersymmetric surface}, which we define as follows.

\begin{definition}
Let $S$ be a smooth connected spacelike 2-surface in a smooth spacetime containing a Maxwell field. $S$ is a {\rm supersymmetric surface} if there exists a Dirac spinor field $\epsilon$ defined on $S$ such that $t^a \hat{\nabla}_a \epsilon=0$ for any vector field $t^a$ tangent to $S$, with $\epsilon$ not identically zero on $S$.
\end{definition}
 
 A standard property of such a spinor is the following:
 
 \begin{lemma}
\label{lem:nonzero}
  If $S$ is a supersymmetric surface then the spinor $\epsilon$ is non-zero everywhere on $S$.
 \end{lemma}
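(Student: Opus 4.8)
The plan is to exploit the fact that, despite its name and its zeroth-order Maxwell coupling, the supercovariant derivative is a genuine linear connection on the Dirac spinor bundle. Writing $\hat{\nabla}_a \epsilon = \nabla_a \epsilon + \frac{1}{4} F_{bc} \gamma^b \gamma^c \gamma_a \epsilon$, the second term is algebraic and complex-linear in $\epsilon$, i.e. it is a smooth endomorphism-valued one-form, so $\hat{\nabla}_a$ differs from the spin connection only by such a term and hence is itself a connection. The defining property of a supersymmetric surface, namely $t^a \hat{\nabla}_a \epsilon = 0$ for every $t^a$ tangent to $S$, is then precisely the statement that $\epsilon$ is covariantly constant along $S$ with respect to $\hat{\nabla}$. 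In particular, along any smooth curve $\gamma : [0,1] \to S$ the restriction $\epsilon \circ \gamma$ satisfies the first-order linear ordinary differential equation $\dot{\gamma}^a \hat{\nabla}_a \epsilon = 0$.

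The core of the argument is then the standard uniqueness theorem for such equations. First I would observe that, since the transport equation along $\gamma$ is linear and homogeneous, its solution is determined by its value at a single point: if $\epsilon$ vanishes at $\gamma(0)$ then $\epsilon \circ \gamma \equiv 0$, so $\epsilon$ also vanishes at $\gamma(1)$. Equivalently, parallel transport $P_\gamma$ is a linear isomorphism of the fibres, and $\epsilon(\gamma(1)) = P_\gamma\, \epsilon(\gamma(0))$. I would then package this as a connectedness argument: the zero set $Z = \{ p \in S : \epsilon(p) = 0 \}$ is closed by continuity of $\epsilon$, and it is open because any point of $Z$ can be joined by a short curve in $S$ to each of its neighbours, along which parallel transport of $0$ gives $0$. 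Since $S$ is connected and, by hypothesis, $\epsilon$ is not identically zero, $Z$ cannot be all of $S$, hence $Z = \emptyset$ and $\epsilon$ is nowhere vanishing.

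The one point that genuinely requires care --- and the only place the argument could go wrong --- is the claim that the transport is complex-\emph{linear} rather than conjugate-linear, since this is what makes parallel transport invertible. The two-component form \eqref{supercov} is potentially misleading here: the couplings to $\bar{\mu}_A$ and $\bar{\lambda}_A$ make $\hat{\nabla}$ look antilinear in the Weyl components. However, this is only an artefact of resolving a $\mathbb{C}$-linear operator on Dirac spinors into Weyl pieces using the reality structure; in the Dirac form $\hat{\nabla}_a \epsilon = \nabla_a \epsilon + \frac{1}{4} F_{bc} \gamma^b \gamma^c \gamma_a \epsilon$ the gamma matrices act complex-linearly and $F_{bc}$ contributes only real coefficients, so $\hat{\nabla}$ is manifestly $\mathbb{C}$-linear. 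I therefore expect it to be cleanest to run the entire argument at the level of the Dirac spinor $\epsilon$. (An alternative route, avoiding linear-ODE uniqueness, would be to note from \eqref{Xnorm} that $\epsilon$ vanishes exactly where the causal vector $X^a$ does and to derive a transport equation for $X^a$ along $S$, but the connection argument above is more direct.)
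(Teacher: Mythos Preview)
Your argument is correct and is essentially the same as the paper's: both reduce to the observation that $t^a\hat{\nabla}_a\epsilon=0$ along a curve in $S$ is a homogeneous linear first-order ODE, so zero initial data forces $\epsilon\equiv 0$ along the curve, contradicting non-triviality on the connected surface $S$. Your open/closed packaging and the paper's direct contradiction via a curve from a zero to a nonzero point are equivalent phrasings; your extended discussion of complex- versus real-linearity is accurate but more than is needed, since real-linearity already suffices for ODE uniqueness (and hence for invertibility of parallel transport).
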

 \begin{proof}
By definition, $\epsilon \ne 0$ at some point $p \in S$. Assume that $\epsilon$ vanishes at $q \in S$ and let $C$ be a smooth curve in $S$ connecting $q$ to $p$. Let $t^a$ be tangent to $C$. Then $t^a \hat{\nabla}_a \epsilon=0$ is a homogeneous linear first order ODE along $C$ with initial condition $\epsilon=0$ at $q$. The unique solution is $\epsilon=0$, which contradicts $\epsilon(p) \ne 0$. 
\end{proof} 

 Consider a supersymmetric spacetime, i.e., one admitting a supercovariantly constant spinor $\epsilon$ that is not identically zero. Arguing similarly to Lemma \ref{lem:nonzero} proves that $\epsilon$ is everywhere non-zero. The extremal Reissner-Nordstr\"om solution (or its analytic extension) is an example of such a spacetime. Now let $S$ be any smooth surface in such a spacetime. Clearly $t^a \hat{\nabla}_a \epsilon=0$ on $S$ so we have shown that $S$ is a supersymmetric surface. Supersymmetric surfaces can also exist in a non-supersymmetric spacetime, i.e., one not admitting a globally defined supercovariantly constant spinor. In the solutions constructed by Kehle and Unger (Fig. \ref{fig:KU}), any spacelike 2-surface $S$ in the extremal Reissner-Nordstr\"om region is supersymmetric but the interior of such a surface is not supersymmetric, i.e., if we write $S=\partial \Sigma$ with $\Sigma$ a compact spacelike surface then there does not exist a supercovariantly constant spinor defined everywhere on $\Sigma$. We shall prove that this behaviour cannot occur in a theory satisfying the local mass-charge inequality \eqref{local_mass_charge} and explain why this implies that the third law cannot be violated by a supersymmetric black hole in such a theory. 

We shall use an approach based on work of Dougan and Mason \cite{Dougan:1991zz}, who showed how to define a quasilocal $4$-momentum for a $2$-surface $S$ and proved that their definition satisfies a positive energy theorem. Our approach is essentially a supercovariant version of theirs. Some of the steps involved in adapting their work to the supercovariant setting have already been made in \cite{Rogatko:2002qe}, in particular Theorem \ref{thm:Ipos} below is similar to a result in  \cite{Rogatko:2002qe}.

Let $S$ be any smooth compact spacelike $2$-surface in a smooth spacetime containing a smooth Maxwell field. Introduce a null tetrad $\{ \ell^a,n^a,m^a,\bar{m}^a\}$ where, on $S$, $\ell^a$ and $n^a$ are future-directed null normals to $S$, and hence $m^a$ and $\bar{m}^a$ are tangent to $S$. Introduce a spinor basis $\{o^A,\iota^A \}$ such that
\be
\ell^a = o^A \bar{o}^{A'}, \qquad n^a = \iota^A \bar{\iota}^{A'}, \qquad m^a = o^A \bar{\iota}^{A'} \qquad \bar{m}^A = \iota^A \bar{o}^{A'}
\ee
The condition for $S$ to be supersymmetric is the existence of non-zero $\epsilon$ such that $m^a \hat{\nabla}_a \epsilon= \bar{m}^a \hat{\nabla}_a \epsilon=0$. From \eqref{supercov}, this is equivalent to the following equations:
\begin{subequations}
\be
\label{hol_conds1}
  \bar{\eth} \lambda_{1'} + \sqrt{2} \bar{\phi}_{0'1'} \bar{\mu}_1 + \rho' \lambda_{0'}= 0 \qquad \qquad \bar{\eth} \mu_{1'} -  \sqrt{2} \bar{\phi}_{0'1'} \bar{\lambda}_1 + \rho' \mu_{0'}= 0
\ee
\be
 \label{hol_conds2}
\bar{\eth} \lambda_{0'} + \sqrt{2} \bar{\phi}_{0'0'} \bar{\mu}_1 + \bar{\sigma} \lambda_{1'} = 0 \qquad  \qquad \bar{\eth} \mu_{0'} - \sqrt{2} \bar{\phi}_{0'0'} \bar{\lambda}_1+ \bar{\sigma} \mu_{1'} = 0
\ee
\end{subequations}
\begin{subequations}
\be
\label{anti_hol_conds1}
 \eth \lambda_{0'} + \sqrt{2} \bar{\phi}_{0'1'} \bar{\mu}_0 + \rho \lambda_{1'} = 0  \qquad \qquad  \eth \mu_{0'} - \sqrt{2} \bar{\phi}_{0'1'} \bar{\lambda}_0 +  \rho \mu_{1'} = 0
 \ee
\be
 \label{anti_hol_conds2}
 \eth \lambda_{1'} + \sqrt{2} \bar{\phi}_{1'1'} \bar{\mu}_0 + \bar{\sigma}' \lambda_{0'} = 0  \qquad \qquad  \eth \mu_{1'} - \sqrt{2} \bar{\phi}_{1'1'} \bar{\lambda}_0 +  \bar{\sigma}' \mu_{0'} = 0
\ee
\end{subequations}
These equations involve the GHP derivative operator $\eth$ \cite{Geroch:1973am}. The Newman-Penrose coefficients $\rho$, $\sigma$ ($\rho'$, $\sigma'$) describe the expansion and shear of the null geodesics with tangent vector $\ell^a$ ($n^a$) on $S$. In particular $\rho,\rho'$ are real \cite{Geroch:1973am}. For a ``normal'' surface the outwards-directed null geodesics are expanding and the inwards-directed null geodesics are contracting. If $\ell^a$ points outwards then this corresponds to $\rho<0$, $\rho'>0$. A trapped surface has $\rho>0$, $\rho'>0$.

Following \cite{Gibbons:1982fy} we introduce the supercovariant generalization of the Nester-Witten \cite{Witten:1981mf,Nester:1981bjx} $2$-form:
\be
\label{hatLambda}
 \hat{\Lambda}_{AA'BB'} ={\rm Re} \left[ (-i) \left( \bar{\lambda}_A \hat{\nabla}_{BB'} \lambda_{A'} + \bar{\mu}_A \hat{\nabla}_{BB'} \mu_{A'} -  \bar{\lambda}_B \hat{\nabla}_{AA'} \lambda_{B'}  - \bar{\mu}_B \hat{\nabla}_{AA'} \mu_{B'} \right)\right]
\ee
For our $2$-surface $S$ and any Dirac spinor $\epsilon=(\bar{\lambda}^A,\mu_{A'})$ defined on $S$ we define the functional
\be
 \hat{I}_S[\epsilon] = \int_S \hat{\Lambda}
\ee
A calculation gives the explicit formula
\bea
\label{hatIS}
 \hat{I}_S[\epsilon] &=&{\rm Re}\, \int_S \left[ \bar{\lambda}_1 \left( \eth \lambda_{0'} +\sqrt{2} \bar{\phi}_{0'1'} \bar{\mu}_0 + \rho \lambda_{1'} \right) - \bar{\lambda}_0 \left( \bar{\eth} \lambda_{1'} +\sqrt{2} \bar{\phi}_{0'1'} \bar{\mu}_1+\rho' \lambda_{0'} \right) \right. \nonumber \\
 &+&  \left. \bar{\mu}_1 \left( \eth \mu_{0'} -\sqrt{2} \bar{\phi}_{0'1'} \bar{\lambda}_0 + \rho \mu_{1'} \right) - \bar{\mu}_0 \left( \bar{\eth} \mu_{1'} -\sqrt{2} \bar{\phi}_{0'1'} \bar{\lambda}_1+\rho'  \mu_{0'} \right) \right]
\eea
We now prove a set of simple Lemmas:

\begin{lemma}
\label{lem:vanishI}
If $S$ is a supersymmetric surface with spinor field $\epsilon$ then $\hat{I}_S[\epsilon]=0$.
\end{lemma}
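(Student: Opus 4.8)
The plan is to exploit the fact that the explicit formula \eqref{hatIS} for $\hat{I}_S[\epsilon]$ has been arranged so that its integrand is built precisely out of the combinations appearing in the supersymmetric surface conditions. First I would inspect the four parenthesized factors in \eqref{hatIS} and match each of them against the equations \eqref{hol_conds1} and \eqref{anti_hol_conds1}. Reading off the terms, the factor multiplying $\bar{\lambda}_1$ is exactly the left-hand side of the first equation in \eqref{anti_hol_conds1}; the factor multiplying $-\bar{\lambda}_0$ is the left-hand side of the first equation in \eqref{hol_conds1}; the factor multiplying $\bar{\mu}_1$ is the left-hand side of the second equation in \eqref{anti_hol_conds1}; and the factor multiplying $-\bar{\mu}_0$ is the left-hand side of the second equation in \eqref{hol_conds1}.

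The key step is then immediate. Because $S$ is a supersymmetric surface, the spinor $\epsilon$ satisfies $m^a \hat{\nabla}_a \epsilon = \bar{m}^a \hat{\nabla}_a \epsilon = 0$, which is equivalent to the full set of conditions \eqref{hol_conds1}--\eqref{anti_hol_conds2}. In particular, each of the four parenthesized factors identified above vanishes identically on $S$. Hence every term in the integrand of \eqref{hatIS} is zero pointwise; taking the real part and integrating over $S$ then gives $\hat{I}_S[\epsilon] = 0$.

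I do not anticipate any genuine obstacle here: once the explicit formula \eqref{hatIS} is in hand, the lemma reduces to the observation that its integrand is a linear combination of the supersymmetric surface conditions, each weighted by a conjugate spinor component. The only point requiring care is bookkeeping, namely checking that the signs and the $\sqrt{2}\,\bar{\phi}_{0'1'}$ couplings in \eqref{hatIS} match those in \eqref{hol_conds1} and \eqref{anti_hol_conds1} term by term, and noting that conditions \eqref{hol_conds2} and \eqref{anti_hol_conds2} are not needed for this particular vanishing (they will presumably enter elsewhere). Since each factor is exactly zero before the $\mathrm{Re}$ is applied, no subtlety arises from taking the real part.
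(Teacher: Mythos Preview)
Your argument is correct and is essentially the same as the paper's, just carried out at the level of the explicit GHP expansion \eqref{hatIS} rather than the more abstract definition \eqref{hatLambda}. The paper simply observes that the pull-back of the supercovariant Nester--Witten $2$-form $\hat{\Lambda}$ to $S$ vanishes because it is built from tangential supercovariant derivatives of $\epsilon$; your term-by-term matching of the four parenthesized factors in \eqref{hatIS} against \eqref{hol_conds1} and \eqref{anti_hol_conds1} is precisely the component form of that statement.
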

\begin{proof}
If $S$ is supersymmetric then the pull-back to $S$ of the Nester-Witten $2$-form vanishes. 
\end{proof}

\begin{lemma}
\label{lem:hatI2}
If $\epsilon$ satisfies \eqref{hol_conds1} on $S$ then 
\be
\label{hatISfull}
 \hat{I}_S[\epsilon]= \int_S \left[ \rho' \left( |\lambda_{0'}|^2 + |\mu_{0'}|^2 \right) +  \rho \left( |\lambda_{1'}|^2 + |\mu_{1'}|^2 \right) \right] 
\ee
\end{lemma}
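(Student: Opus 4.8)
The plan is to begin from the explicit formula \eqref{hatIS} and to exploit the supersymmetry conditions \eqref{hol_conds1}, which constrain only the $\bar{\eth}$-derivatives of the components carrying the index $1'$. The first step is the observation that the two terms in \eqref{hatIS} of the form $-\bar{\lambda}_0(\cdots)$ and $-\bar{\mu}_0(\cdots)$ have parenthetical factors that are precisely the left-hand sides of the two equations in \eqref{hol_conds1}; hence these terms vanish identically. After dropping them, the terms $\bar{\lambda}_1\rho\lambda_{1'}$ and $\bar{\mu}_1\rho\mu_{1'}$ are already real (recall $\rho$ is real) and immediately reproduce the $\rho(|\lambda_{1'}|^2+|\mu_{1'}|^2)$ part of \eqref{hatISfull}. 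It then remains to show that
\[ {\rm Re}\int_S\left[\bar{\lambda}_1\,\eth\lambda_{0'}+\bar{\mu}_1\,\eth\mu_{0'}+\sqrt{2}\,\bar{\phi}_{0'1'}\left(\bar{\lambda}_1\bar{\mu}_0-\bar{\mu}_1\bar{\lambda}_0\right)\right]=\int_S \rho'\left(|\lambda_{0'}|^2+|\mu_{0'}|^2\right). \]

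The second step is to integrate the two $\eth$-terms by parts. Since $S$ is compact and without boundary, and the products $\bar{\lambda}_1\lambda_{0'}$ and $\bar{\mu}_1\mu_{0'}$ carry GHP type $\{-1,1\}$, the standard identity $\oint_S\eth\eta\,dS=0$ (valid for $\eta$ of this type) gives $\int_S\bar{\lambda}_1\,\eth\lambda_{0'}=-\int_S\lambda_{0'}\,\eth\bar{\lambda}_1$, and similarly for the $\mu$-term. I would then take the complex conjugate of \eqref{hol_conds1}, using $\overline{\bar{\eth}(\,\cdot\,)}=\eth\,\overline{(\,\cdot\,)}$, $\overline{\bar{\phi}_{0'1'}}=\phi_{01}$ and the reality of $\rho'$, to obtain $\eth\bar{\lambda}_1=-\sqrt{2}\,\phi_{01}\mu_{1'}-\rho'\bar{\lambda}_0$ and $\eth\bar{\mu}_1=\sqrt{2}\,\phi_{01}\lambda_{1'}-\rho'\bar{\mu}_0$. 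Substituting these produces the desired $\rho'|\lambda_{0'}|^2$ and $\rho'|\mu_{0'}|^2$ terms (using $\lambda_{0'}\bar{\lambda}_0=|\lambda_{0'}|^2$ and $\mu_{0'}\bar{\mu}_0=|\mu_{0'}|^2$), together with residual Maxwell terms ${\rm Re}\int_S\sqrt{2}\,\phi_{01}(\lambda_{0'}\mu_{1'}-\mu_{0'}\lambda_{1'})$.

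The final step is to check that all Maxwell terms cancel. Using ${\rm Re}(z)={\rm Re}(\bar{z})$ to rewrite the residual $\phi_{01}$-terms as their conjugates, e.g. ${\rm Re}(\phi_{01}\lambda_{0'}\mu_{1'})={\rm Re}(\bar{\phi}_{0'1'}\bar{\lambda}_0\bar{\mu}_1)$, they combine with the explicit $\bar{\phi}_{0'1'}$-terms above into ${\rm Re}\int_S\sqrt{2}\,\bar{\phi}_{0'1'}(\bar{\lambda}_0\bar{\mu}_1-\bar{\mu}_0\bar{\lambda}_1+\bar{\lambda}_1\bar{\mu}_0-\bar{\mu}_1\bar{\lambda}_0)$, whose integrand vanishes term by term. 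This leaves exactly $\int_S\rho'(|\lambda_{0'}|^2+|\mu_{0'}|^2)$, establishing \eqref{hatISfull}.

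I expect the integration by parts to be the step demanding the most care: one must verify that $\bar{\lambda}_1\lambda_{0'}$ and $\bar{\mu}_1\mu_{0'}$ indeed carry the GHP weight $\{-1,1\}$ for which $\oint_S\eth\eta\,dS=0$ holds on a closed surface, so that no further connection or boundary contributions appear, and one must keep the reality operation ${\rm Re}$ consistent throughout so that the Maxwell contributions cancel identically rather than merely simplifying.
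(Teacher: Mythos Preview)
Your proof is correct and follows essentially the same approach as the paper's: use \eqref{hol_conds1} to kill the $\bar{\eth}$ brackets, integrate the $\eth$ terms by parts, substitute the complex conjugate of \eqref{hol_conds1}, and observe that the Maxwell contributions drop out under ${\rm Re}$. Your explicit cancellation of the $\bar{\phi}_{0'1'}$ terms is just a spelled-out version of the paper's remark that these terms ``end up making an imaginary contribution to the integral and hence drop out when we take the real part'' (indeed your combination $A+B$ equals $A-\bar A$, which is purely imaginary).
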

\begin{proof}
Following  \cite{Dougan:1991zz}, use  \eqref{hol_conds1} to eliminate the $\bar{\eth}$ terms from \eqref{hatIS}, then integrate by parts the two terms involving $\eth$ and use the complex conjugate of \eqref{hol_conds1} to eliminate the resulting ${\eth}$ terms. The terms involving $\bar{\phi}_{0'1'}$ end up making an imaginary contribution to the integral and hence drop out when we take the real part.
\end{proof}

\begin{lemma}
\label{lem:trapped}
A supersymmetric surface is not trapped.
\end{lemma}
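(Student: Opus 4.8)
The plan is to obtain a contradiction from assuming that a supersymmetric surface $S$ is trapped, using only the three preceding lemmas together with the sign structure of \eqref{hatISfull}. Since $S$ is supersymmetric, its spinor $\epsilon$ satisfies the full holomorphicity system \eqref{hol_conds1}--\eqref{anti_hol_conds2}, and in particular it satisfies \eqref{hol_conds1}. Thus Lemma \ref{lem:hatI2} applies and expresses $\hat{I}_S[\epsilon]$ as the integral in \eqref{hatISfull}, while Lemma \ref{lem:vanishI} gives $\hat{I}_S[\epsilon]=0$ because the pull-back of the Nester-Witten form vanishes on a supersymmetric surface.

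The key step is then to read off positivity. A trapped surface has $\rho>0$ and $\rho'>0$, so the integrand of \eqref{hatISfull} is a sum of the manifestly non-negative quantities $|\lambda_{0'}|^2,|\mu_{0'}|^2,|\lambda_{1'}|^2,|\mu_{1'}|^2$ weighted by strictly positive coefficients. A continuous non-negative integrand whose integral over the (compact) surface $S$ vanishes must vanish identically, and with $\rho,\rho'>0$ this forces all four spinor components---and hence $\epsilon$ itself---to be zero at every point of $S$.

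Finally, $\epsilon \equiv 0$ on $S$ contradicts Lemma \ref{lem:nonzero}, which ensures $\epsilon$ is nowhere vanishing on a supersymmetric surface; therefore $S$ cannot be trapped. I anticipate no real obstacle here, as the proof is just an assembly of Lemmas \ref{lem:nonzero}, \ref{lem:vanishI} and \ref{lem:hatI2}. The only point deserving a moment's care is confirming that trappedness makes the integrand pointwise non-negative---this is exactly what lets a vanishing integral collapse the entire spinor---and, should one wish to relax the compactness implicit in the definition of a trapped surface, checking that the integral in \eqref{hatISfull} remains well-defined so that the same conclusion follows.
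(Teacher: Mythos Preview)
Your proof is correct and follows essentially the same route as the paper: a supersymmetric surface satisfies \eqref{hol_conds1}, so Lemma~\ref{lem:hatI2} gives the formula \eqref{hatISfull}, while Lemma~\ref{lem:vanishI} forces $\hat{I}_S[\epsilon]=0$; trappedness ($\rho,\rho'>0$) together with $\epsilon\not\equiv 0$ (Lemma~\ref{lem:nonzero}) then produces the contradiction. The only cosmetic difference is that the paper phrases it as ``$\epsilon\ne 0$ makes the integral strictly positive'' rather than ``vanishing integral forces $\epsilon\equiv 0$,'' which is the same argument in contrapositive form.
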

\begin{proof}
A supersymmetric surface satisfies \eqref{hol_conds1} and so \eqref{hatISfull} holds. If the surface were trapped then it would have $\rho>0$ and $\rho'>0$. Since $\epsilon \ne 0$ this implies $\hat{I}_S[\epsilon]>0$, contradicting  Lemma \ref{lem:vanishI}.
\end{proof}

The next Lemma covers the case of a marginally trapped surface. Recall that $X^a$ and $V$ are defined by \eqref{XVdef}.
\begin{lemma}
\label{lem:marg}
If $S$ is a supersymmetric surface with $\rho'>0$ and $\rho\equiv 0$ then $V \equiv 0$ on $S$ and $X^a$ is null on $S$. 
\end{lemma}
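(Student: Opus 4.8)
The plan is to combine the vanishing of $\hat{I}_S[\epsilon]$ (Lemma \ref{lem:vanishI}) with the explicit formula of Lemma \ref{lem:hatI2}, specialized to $\rho \equiv 0$, so as to force two spinor components to vanish pointwise, and then to read off the consequences for $V$ and $X^a$ directly from their definitions \eqref{XVdef}. First I would note that a supersymmetric surface satisfies \eqref{hol_conds1}, so Lemma \ref{lem:hatI2} applies and \eqref{hatISfull} holds. Setting $\rho \equiv 0$ annihilates the second group of terms in \eqref{hatISfull}, leaving
\be
 \hat{I}_S[\epsilon] = \int_S \rho' \left( |\lambda_{0'}|^2 + |\mu_{0'}|^2 \right).
\ee
On the other hand, Lemma \ref{lem:vanishI} gives $\hat{I}_S[\epsilon]=0$. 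Since $\rho'>0$ the integrand is pointwise non-negative, so by continuity it must vanish identically, forcing $\lambda_{0'}=\mu_{0'}=0$ everywhere on $S$.

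It then remains to feed this back into \eqref{XVdef}. Complex conjugation sends $\lambda_{0'}\mapsto\bar{\lambda}_0$ and $\mu_{0'}\mapsto\bar{\mu}_0$, so $\bar{\lambda}_0=\bar{\mu}_0=0$ on $S$. Expanding $V=\bar{\lambda}_A\bar{\mu}^A=\epsilon^{AB}\bar{\lambda}_A\bar{\mu}_B$ in the dyad leaves only terms containing one of the now-vanishing components $\bar{\lambda}_0$ or $\bar{\mu}_0$, so $V\equiv 0$ on $S$. Finally, the nullity of $X^a$ is immediate from the identity \eqref{Xnorm}: $X_a X^a = V\bar{V}=0$. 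Since $\epsilon\ne 0$ everywhere by Lemma \ref{lem:nonzero}, the remark following \eqref{Xnorm} shows $X^a$ is a nonzero null vector, which is the desired conclusion.

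I do not anticipate a serious obstacle: once \eqref{hatISfull} is in hand the argument is a short chain of substitutions. The one point that demands care is the bookkeeping of dyad components and the antisymmetric $\epsilon^{AB}$ in the contraction defining $V$, in order to confirm that \emph{both} surviving terms genuinely involve a vanishing component rather than cancelling for some spurious reason; after that step, appealing to \eqref{Xnorm} rather than expanding $X^a$ component-by-component sidesteps any further computation. The only subtlety worth double-checking is the sign convention relating the primed components $\lambda_{0'},\mu_{0'}$ appearing in \eqref{hatISfull} to their conjugates $\bar{\lambda}_0,\bar{\mu}_0$ entering \eqref{XVdef}, but this is purely a matter of conventions fixed earlier and does not affect the vanishing.
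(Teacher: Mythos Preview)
Your proposal is correct and follows essentially the same approach as the paper: combine Lemmas~\ref{lem:vanishI} and~\ref{lem:hatI2} with $\rho\equiv 0$, $\rho'>0$ to force $\lambda_{0'}=\mu_{0'}=0$, then read off $V=0$ and invoke \eqref{Xnorm}. You simply spell out in more detail the dyad bookkeeping and the appeal to Lemma~\ref{lem:nonzero} that the paper leaves implicit.
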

\begin{proof}
From Lemmas \ref{lem:vanishI} and \ref{lem:hatI2} we must have $\lambda_{0'} = \mu_{0'}\equiv 0$ on $S$ which implies $V\equiv 0$ on $S$ so \eqref{Xnorm} implies that $X^a$ is null on $S$.
\end{proof}

The following result is a supercovariant version of a theorem of Dougan and Mason \cite{Dougan:1991zz}. A similar result has been obtained by Rogatko \cite{Rogatko:2002qe}. We will later apply this to the situation where $S$ is a cross-section of the event horizon of a black hole formed in gravitational collapse, with $\Sigma$ the black hole interior, the objective being to prove that this is impossible if $S$ is ``the same as'' a horizon cross-section of a supersymmetric black hole.

\begin{theorem}
\label{thm:Ipos}
Let $\Sigma$ be a smooth compact connected spacelike surface with smooth connected boundary $\partial \Sigma  =S$ in a smooth spacetime satisfying the local mass-charge inequality \eqref{local_mass_charge}. Assume that $\rho' > 0$ on $S$ and let $\epsilon$ be a non-zero solution of \eqref{hol_conds1} on $S$. Then $\hat{I}_S[\epsilon] \ge 0$ with equality if, and only if, (i) $S$ is a supersymmetric surface (with spinor $\epsilon$), (ii) $\epsilon$ extends to a spinor on  $\Sigma$ satisfying $h^a_b \hat{\nabla}_a \epsilon=0$ where $h^a_b$ is the projection onto $\Sigma$ and, (iii) on $\Sigma$, the charged matter satisfies
 \be
 \label{Sigma_matter}
  N^a \left( T^{(m)}_{ab} X^b - {\rm Re}(V) J_a + {\rm Im}(V)  \tilde{J}_a \right) = 0
 \ee
 where $N^a$ is a normal to $\Sigma$ and $X^a$, $V$ are defined by \eqref{XVdef}.
 \end{theorem}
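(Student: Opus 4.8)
The plan is to prove positivity by a supercovariant, quasi-local version of Witten's argument. Since $\hat{\Lambda}$ is a $2$-form and $\Sigma$ is compact with $\partial\Sigma=S$, Stokes' theorem gives $\hat{I}_S[\epsilon]=\int_S\hat{\Lambda}=\int_\Sigma d\hat{\Lambda}$ for any smooth extension of $\epsilon$ from $S$ into $\Sigma$. Because the pull-back of $\hat{\Lambda}$ to $S$ involves only derivatives of $\epsilon$ tangent to $S$---as is already manifest in \eqref{hatIS}, which contains only $\eth$, $\bar{\eth}$ and $\rho,\rho'$---the number $\hat{I}_S[\epsilon]$ depends only on $\epsilon|_S$, so I am free to choose the most convenient extension. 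This is the freedom I would exploit.

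Next I would evaluate $d\hat{\Lambda}$ by commuting supercovariant derivatives. Following Gibbons et al. \cite{Gibbons:1982fy,Gibbons:1982jg}, the second-order terms assemble into the Einstein tensor together with derivatives and bilinears of $\phi_{AB}$; inserting the Einstein equation \eqref{einstein} and the Maxwell equations (whose sources are $J_a$, $\tilde{J}_a$) converts the curvature and the Maxwell stress tensor into the matter fields. Contracting with the future-directed unit normal $N^a$ of $\Sigma$ and performing the usual Sen--Witten rearrangement, I expect an identity of the form
\[
 \hat{I}_S[\epsilon]=\int_\Sigma\Big[\,4\pi\,N^a\big(T^{(m)}_{ab}X^b-\mathrm{Re}(V)J_a+\mathrm{Im}(V)\tilde{J}_a\big)+\big|h^a{}_b\hat{\nabla}_a\epsilon\big|^2-\big|\gamma^i\hat{\nabla}_i\epsilon\big|^2\,\Big]\,dV,
\]
where the norms use the positive-definite inner product defined by $N^a$, $h^a{}_b$ projects onto $\Sigma$, and $\gamma^i\hat{\nabla}_i\epsilon$ is the tangential (three-dimensional) supercovariant Dirac operator. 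Here $X^a$ corresponds to the current $\bar{\epsilon}\gamma^a\epsilon$ and $\mathrm{Re}(V),\mathrm{Im}(V)$ to the scalars $\bar{\epsilon}\epsilon,\bar{\epsilon}\gamma_5\epsilon$ of \eqref{XVdef}, so the first term is a quadratic form $\bar{\epsilon}\,\mathcal{M}\,\epsilon$ in $\epsilon$.

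To remove the indefinite last term I would fix the extension to be a solution of the supercovariant Witten equation $\gamma^i\hat{\nabla}_i\epsilon=0$ on $\Sigma$, taking \eqref{hol_conds1} as the boundary condition on $S$; the hypothesis $\rho'>0$ is precisely what renders this an elliptic, well-posed boundary-value problem, so existence of the extension follows from standard elliptic theory as in \cite{Dougan:1991zz,Rogatko:2002qe}. With this choice the identity collapses to $\hat{I}_S[\epsilon]=\int_\Sigma[\,4\pi\,\bar{\epsilon}\,\mathcal{M}\,\epsilon+|h^a{}_b\hat{\nabla}_a\epsilon|^2\,]\,dV$. The gradient term is manifestly non-negative. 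The key observation is that, in the orthonormal frame with $e_0^a=N^a$, pointwise non-negativity of $\bar{\epsilon}\,\mathcal{M}\,\epsilon$ is exactly the local mass-charge inequality \eqref{local_mass_charge} applied to $Z^a=N^a$, i.e. $T^{(m)}_{00}\ge\sqrt{T^{(m)}_{0i}T^{(m)}_{0i}+J_0^2+\tilde{J}_0^2}$. Hence both integrands are non-negative and $\hat{I}_S[\epsilon]\ge0$.

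For the equality case, $\hat{I}_S[\epsilon]=0$ forces both integrands to vanish pointwise on $\Sigma$. Vanishing of the gradient term gives $h^a{}_b\hat{\nabla}_a\epsilon=0$, which is (ii); restricting to vectors tangent to $S\subset\Sigma$ then gives $t^a\hat{\nabla}_a\epsilon=0$, so $S$ is supersymmetric with spinor $\epsilon$, which is (i). Vanishing of the matter form is precisely (iii). Conversely, (i) alone already yields $\hat{I}_S[\epsilon]=0$ by Lemma \ref{lem:vanishI}, so the three conditions are equivalent to equality. The main obstacle is the analytic step: establishing that the supercovariant Witten equation with boundary condition \eqref{hol_conds1} is a well-posed elliptic problem on the compact manifold-with-boundary $\Sigma$---this is where $\rho'>0$ is essential---and producing its solution matching the given $\epsilon|_S$. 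A secondary technical point is checking that the curvature-to-matter rearrangement yields exactly the combination \eqref{Sigma_matter} and that its non-negativity coincides with \eqref{local_mass_charge}; this hinges on the precise normalizations of $X^a$ and $V$ in \eqref{XVdef}.
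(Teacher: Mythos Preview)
Your strategy has the right shape but contains a genuine gap at the step where you ``fix the extension to be a solution of the supercovariant Witten equation''. The Sen--Witten equation is a first-order system on $\Sigma$; its well-posed boundary condition prescribes only half of the spinor components on $S$ (in the paper's conventions, $\tilde{\lambda}_{1'}=\lambda_{1'}$ and $\tilde{\mu}_{1'}=\mu_{1'}$), not the full restriction $\epsilon|_S$. You therefore cannot in general find an extension that both agrees with the given $\epsilon$ on $S$ \emph{and} solves $\gamma^i\hat{\nabla}_i\epsilon=0$ on $\Sigma$: with full Dirichlet data the problem is overdetermined. Consequently the Sen--Witten solution $\tilde{\epsilon}$ will typically have $\tilde{\lambda}_{0'}\ne\lambda_{0'}$, $\tilde{\mu}_{0'}\ne\mu_{0'}$ on $S$, so $\hat{I}_S[\tilde{\epsilon}]\ne\hat{I}_S[\epsilon]$, and your claim that ``$\hat{I}_S[\epsilon]$ depends only on $\epsilon|_S$, so I am free to choose the most convenient extension'' does not help, because $\tilde{\epsilon}$ is \emph{not} an extension of $\epsilon$. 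Relatedly, \eqref{hol_conds1} is a tangential constraint on $\epsilon|_S$, not a boundary condition for the Sen--Witten operator.

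The paper closes this gap with a boundary identity, obtained by combining \eqref{hol_conds1} with integration by parts on $S$:
\[
 \hat{I}_S[\epsilon]=\hat{I}_S[\tilde{\epsilon}]+\int_S \rho'\big(|\tilde{\lambda}_{0'}-\lambda_{0'}|^2+|\tilde{\mu}_{0'}-\mu_{0'}|^2\big).
\]
The correction term is non-negative precisely because $\rho'>0$, and this is the true role of that hypothesis (the Sen--Witten operator is elliptic regardless). Now your Witten-identity argument applies to $\hat{I}_S[\tilde{\epsilon}]$, giving $\hat{I}_S[\tilde{\epsilon}]\ge 0$ and hence $\hat{I}_S[\epsilon]\ge 0$. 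In the equality case, vanishing of the boundary term forces $\tilde{\epsilon}|_S=\epsilon|_S$, so $\tilde{\epsilon}$ \emph{is} an extension of $\epsilon$; the vanishing of the two bulk integrands then yields (ii), (iii), and hence (i), exactly as you describe.
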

\noindent (Note that if $n^a$ is chosen as the inward-pointing normal to $S$ then $\rho'>0$ is the statement that the inward directed null geodesics orthogonal to $S$ are converging.)
\begin{proof}
Following \cite{Dougan:1991zz}, let $\tilde{\epsilon}$ be any spinor field on $\Sigma$ satisfying 
\be
\label{wittenbcs}
 \tilde{\lambda}_{1'} = \lambda_{1'} \qquad \qquad \tilde{\mu}_{1'} = \mu_{1'} \qquad {\rm on} \; S
\ee
Consider $\hat{I}_S[\tilde{\epsilon}]$ written in the form \eqref{hatIS}. Following \cite{Dougan:1991zz} we can rearrange using \eqref{hol_conds1}, \eqref{wittenbcs} and integration by parts. Once again, the terms involving $\bar{\phi}_{0'1'}$ drop out when we take the real part, leaving
\be
\label{positivity}
 \hat{I}_S[\epsilon]  = \hat{I}_S[\tilde{\epsilon}] +  \int_S \rho' \left( |\tilde{\lambda}_{0'} - \lambda_{0'}|^2 +  |\tilde{\mu}_{0'} - \mu_{0'}|^2 \right)
\ee
Now assume that $\tilde{\epsilon}$ satisfies the supercovariant version \cite{Gibbons:1982jg} of the Sen-Witten equation on $\Sigma$ 
\be
\label{senwitten}
\gamma^b h^a_b \hat{\nabla}_a \tilde{\epsilon}=0
\ee
with boundary conditions \eqref{wittenbcs}. An explanation for why such a solution exists is given in Appendix \ref{app:spinor}. In \cite{Gibbons:1982fy,Gibbons:1982jg} it is shown how to convert $\hat{I}_S[\tilde{\epsilon}] \equiv \int_S \hat{\Lambda}[\tilde{\epsilon}] = \int_\Sigma d\hat{\Lambda}$ into a manifestly non-negative expression, provided the local mass-charge inequality is satisfied. Hence if $\rho' \ge 0$ on $S$ then the RHS of \eqref{positivity} is a sum of non-negative terms and so we have proved $\hat{I}_S[\epsilon] \ge 0$. 

If $\rho'>0$ on $S$ and $\hat{I}_S[\epsilon]=0$ then \eqref{positivity} and \eqref{wittenbcs} imply that $\tilde{\epsilon} = \epsilon$ on $S$ and so $\tilde{\epsilon}$ is an extension of $\epsilon$ so we can drop the tilde. 
Furthermore, from  \cite{Gibbons:1982fy,Gibbons:1982jg}, $\int_\Sigma d\hat{\Lambda}$ is a sum of two non-negative terms (see in particular equation A11 of \cite{Gibbons:1982jg}). The first vanishes if, and only if, $h^a_b \hat{\nabla}_a \epsilon=0$ on $\Sigma$ (which implies that $S$ is supersymmetric). The second vanishes if, and only if, \eqref{Sigma_matter} holds on $\Sigma$. Conversely, if (i), (ii) and (iii) hold then we can reverse the argument to deduce that $\hat{I}[\epsilon]=0$. 
\end{proof}

It follows from Theorem \ref{thm:Ipos} that if a supersymmetric surface has a compact interior then this interior does not contain a trapped surface:

\begin{corollary}
\label{cor:nottrapped}
If $\Sigma$ and $S$ satisfy the assumptions of Theorem \ref{thm:Ipos} and $S$ is a supersymmetric surface then 
every smooth 2d surface $S' \subset \Sigma$ is supersymmetric and hence (Lemma \ref{lem:trapped}) not trapped.
\end{corollary}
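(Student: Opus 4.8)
The plan is to read the result straight off the equality case of Theorem~\ref{thm:Ipos}. Since $S$ is a supersymmetric surface, it satisfies the holomorphicity conditions \eqref{hol_conds1}, and $\epsilon$ is non-zero by Lemma~\ref{lem:nonzero}, so the hypotheses of Theorem~\ref{thm:Ipos} are met (recall $\rho'>0$ on $S$ is assumed). Moreover, Lemma~\ref{lem:vanishI} gives $\hat{I}_S[\epsilon]=0$. We are therefore in the equality case, and conclusions (i)--(iii) of Theorem~\ref{thm:Ipos} hold. The only one I would use is (ii): $\epsilon$ extends to a spinor field on $\Sigma$, which I continue to denote $\epsilon$, satisfying $h^a_b \hat{\nabla}_a \epsilon = 0$ on all of $\Sigma$.

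First I would reinterpret condition (ii). Because $h^a_b$ projects vectors onto the tangent space of $\Sigma$, contracting $h^a_b\hat{\nabla}_a\epsilon=0$ with an arbitrary vector $W^b$ shows that $(h^a_b W^b)\hat{\nabla}_a\epsilon=0$; as $W$ ranges over all vectors, $h^a_b W^b$ ranges over all vectors tangent to $\Sigma$. Hence (ii) is exactly the statement that $t^a \hat{\nabla}_a \epsilon=0$ for every $t^a$ tangent to $\Sigma$. Next I would upgrade Lemma~\ref{lem:nonzero} from $S$ to $\Sigma$: $\epsilon$ is non-zero somewhere on $\Sigma$ (indeed everywhere on $S$), and running the homogeneous linear first-order ODE uniqueness argument of Lemma~\ref{lem:nonzero} along smooth curves in the connected surface $\Sigma$ shows that $\epsilon$ is non-zero everywhere on $\Sigma$.

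With these two facts in hand the conclusion is immediate. Let $S'\subset\Sigma$ be any smooth $2$-surface; being a $2$-surface inside the spacelike $\Sigma$, it is automatically spacelike, so the definition of a supersymmetric surface applies. Any vector $t^a$ tangent to $S'$ is tangent to $\Sigma$, so $t^a\hat{\nabla}_a\epsilon=0$ on $S'$, and $\epsilon$ is non-zero on $S'$; thus $S'$ is a supersymmetric surface by definition, and Lemma~\ref{lem:trapped} then gives that $S'$ is not trapped. There is no serious analytic obstacle here---the substantive work was carried out in Theorem~\ref{thm:Ipos}---so the only points needing care are the equivalence between the projected equation in (ii) and the tangential-derivative condition in the definition of a supersymmetric surface, and the transfer of the ODE uniqueness argument from curves in $S$ to curves in the higher-dimensional $\Sigma$.
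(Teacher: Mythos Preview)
Your proposal is correct and follows essentially the same route as the paper's proof: invoke the equality case of Theorem~\ref{thm:Ipos} (via Lemma~\ref{lem:vanishI}) to extend $\epsilon$ over $\Sigma$ with $h^a_b\hat{\nabla}_a\epsilon=0$, then run the ODE uniqueness argument of Lemma~\ref{lem:nonzero} along curves in $\Sigma$ to conclude $\epsilon$ is nowhere zero, whence any $S'\subset\Sigma$ is supersymmetric. Your version simply spells out more explicitly the reinterpretation of the projected equation as a tangential-derivative condition and the transfer of the ODE argument to $\Sigma$.
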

\begin{proof}
From the theorem there exists a not identically zero spinor $\epsilon$ satisfying $h^a_b \hat{\nabla}_a \epsilon=0$ on $\Sigma$. Arguing similarly to Lemma \eqref{lem:nonzero}, $\epsilon$ cannot vanish anywhere, in particular it does not vanish on $S'$. it follows that $S'$ is a supersymmetric surface.
\end{proof}

Although we shall not need to use it, we state the following result for completeness: 

\begin{lemma}
\label{lem:charged_dust}
If \eqref{Sigma_matter} holds and the local mass-charge inequality is satisfied then the following holds on $\Sigma$:
\be
\label{Sigma_matter2}
 T^{(m)}_{ab} = \chi X_a X_b \qquad {J}_a = \chi {\rm Re}(V) X_a \qquad {\tilde{J}}_a = - \chi {\rm Im}(V) X_a
 \ee
 where $\chi \ge 0$ is a scalar field on $\Sigma$. 
 \end{lemma}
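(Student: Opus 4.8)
The plan is to show that the hypothesis \eqref{Sigma_matter} actually forces the whole covector $Y_a \equiv T^{(m)}_{ab}X^b - {\rm Re}(V)J_a + {\rm Im}(V)\tilde J_a$ to vanish on $\Sigma$, and then to exploit the rigidity of the local mass-charge inequality \eqref{local_mass_charge} to upgrade $Y_a=0$ to the rank-one structure \eqref{Sigma_matter2}.

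First I would re-examine, in a sharper form, the positivity chain underlying the matter term of Theorem \ref{thm:Ipos}. For \emph{any} future-directed timelike unit $N'^a$, the dominant energy condition makes $T^{(m)}_{ab}N'^b$ future-causal, so the reverse Cauchy-Schwarz inequality gives $X^aT^{(m)}_{ab}N'^b \ge |V|\,|T^{(m)}(N')|$ (recall $X_aX^a=|V|^2$ from \eqref{Xnorm}); the strengthened inequality \eqref{local_mass_charge} at $Z=N'$ gives $|T^{(m)}(N')| \ge \sqrt{(J\cdot N')^2+(\tilde J\cdot N')^2}$; and ordinary Cauchy-Schwarz in the $({\rm Re}\,V,-{\rm Im}\,V)$ plane bounds this below by $N'^a({\rm Re}(V)J_a-{\rm Im}(V)\tilde J_a)$. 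Chaining these shows $N'^aY_a\ge 0$ for every future-causal $N'^a$, i.e. $Y^a$ is itself future-causal. Since \eqref{Sigma_matter} states precisely that $Y_a$ is orthogonal to the future-timelike normal $N^a$, and a causal vector orthogonal to a timelike vector must vanish, I conclude $Y_a=0$ on $\Sigma$, i.e. $T^{(m)}_{ab}X^b = {\rm Re}(V)J_a-{\rm Im}(V)\tilde J_a$.

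Next I would feed $Y_a=0$ back into the chain. Replacing $N'$ by an arbitrary future-causal $Z$, both ends of the inequality chain then equal $X^aT^{(m)}_{ab}Z^b$ (using $Y=0$ and the symmetry of $T^{(m)}_{ab}$), so \emph{every} inequality must be saturated, for every $Z$. When $V\ne 0$ (so $X^a$ is timelike) saturation of reverse Cauchy-Schwarz forces $T^{(m)}_{ab}Z^b\parallel X_a$ for all $Z$; writing $T^{(m)}_{ab}Z^b=X_a\,\omega_bZ^b$ and using symmetry of $T^{(m)}_{ab}$ gives $\omega_b\propto X_b$, i.e. $T^{(m)}_{ab}=\chi X_aX_b$ with $\chi\ge 0$ by the weak energy condition. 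Saturation of the other two inequalities forces $(J\cdot Z,\tilde J\cdot Z)\parallel({\rm Re}\,V,-{\rm Im}\,V)$, with the magnitude fixed by $|T^{(m)}(Z)|^2=(J\cdot Z)^2+(\tilde J\cdot Z)^2$, yielding exactly $J_a=\chi\,{\rm Re}(V)X_a$ and $\tilde J_a=-\chi\,{\rm Im}(V)X_a$.

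The main obstacle is the degenerate case $V\equiv 0$ (the marginally trapped case of Lemma \ref{lem:marg}), where $X^a$ is null and the reverse-Cauchy-Schwarz step is vacuous, so the alignment argument collapses. Here $Y_a=0$ reads $T^{(m)}_{ab}X^b=0$, and I would argue separately that the dominant energy condition plus a null vector in the kernel of $T^{(m)}_{ab}$ forces $T^{(m)}_{ab}=\chi X_aX_b$: choosing an orthonormal frame with $X^a=e_0^a+e_1^a$, the momentum-flux bound for the $e_0$ observer kills the components mixing into the transverse $2$-plane, while boosted observers in that plane kill the transverse pressures, since any nonzero transverse pressure would make $T^{(m)}_{ab}u^b$ spacelike for a suitable boost $u^a$, contradicting DEC. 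Because $X^a$ is then null, $|T^{(m)}(Z)|^2\equiv 0$, so the strengthened inequality \eqref{local_mass_charge} forces $J\cdot Z=\tilde J\cdot Z=0$ for all $Z$, hence $J_a=\tilde J_a=0$, which is exactly \eqref{Sigma_matter2} at $V=0$. A final minor point is to record that $X^a$ is nowhere zero (Corollary \ref{cor:nottrapped}), so the pointwise case split is well defined throughout $\Sigma$.
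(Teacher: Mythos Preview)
Your argument is correct, and in the timelike case it is genuinely tidier than the paper's proof in Appendix~\ref{app:dust}. Both proofs split into the cases $V=0$ and $V\ne 0$ and both rely on the same chain of three inequalities (reverse Cauchy--Schwarz, the strengthened DEC \eqref{local_mass_charge}, and ordinary Cauchy--Schwarz in the $({\rm Re}\,V,-{\rm Im}\,V)$ plane). The difference is that the paper only invokes this chain at $Z=N$, which yields the $T^{(m)}_{0\mu}$, $J_0$, $\tilde J_0$ components in an $N$-adapted frame; it then has to perform a separate component-by-component analysis of \eqref{local_mass_charge} (examining the $Z^0Z^i$ and $Z^iZ^j$ coefficients) to pin down $T^{(m)}_{ij}$, $J_i$, $\tilde J_i$. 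Your key extra observation is that the chain shows $Y_a$ is itself future-causal, so the single hypothesis $N^aY_a=0$ already forces $Y_a=0$; the saturation argument then runs for \emph{every} causal $Z$ simultaneously and delivers the full rank-one form of $T^{(m)}_{ab}$, $J_a$, $\tilde J_a$ in one step. This buys a shorter, frame-free proof of the $V\ne 0$ case. In the null case $V=0$ the two proofs are essentially the same, though the paper's version of ``DEC with a null vector in the kernel implies null dust'' is slightly slicker than your frame argument: from $T^{(m)}_{ab}X^b=0$ one contracts $T^{(m)}_{ab}Y^b$ with $X^a$ to see that the (causal) vector $T^{(m)}_{ab}Y^b$ is orthogonal to the null $X^a$, hence parallel to it, and the rank-one form follows as in your timelike step.
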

\begin{proof}
See Appendix \ref{app:dust}. 
\end{proof}

At points where $V \ne 0$, \eqref{Sigma_matter2} describes charged dust. At points where $V =0$ it describes uncharged null dust. 

We can now present our main result. To motivate this, consider a cross-section $S$ of the event horizon of an extremal Reissner-Nordstr\"om black hole. This is a supersymmetric surface that is marginally trapped, i.e., it has $\rho'>0$ and $\rho \equiv 0$. Furthermore, $S$ has non-zero electric and/or magnetic charge. Now if such a black hole were formed in gravitational collapse then it would have compact interior, i.e., $S = \partial \Sigma$ where $\Sigma$ is a compact spacelike surface, and the total electric or magnetic charge on $\Sigma$ would be non-zero. The next theorem shows that this is impossible, i.e., the third law of black hole mechanics holds (we'll formulate this more precisely as a corollary below). Note that in a maximal analytic extension of the extremal Reissner-Nordstr\"om solution we {\it can} write $S = \partial\Sigma$ where $\Sigma$ is smooth and spacelike but such $\Sigma$ is {\it not compact} because of the curvature singularity at $r=0$, which is not part of the spacetime. 

\begin{theorem}
\label{thm:spinor}
Let $S$ be a smooth compact connected supersymmetric surface with $\rho'>0$ and $\rho \equiv 0$ in a smooth spacetime satisfying the local mass-charge inequality. If there exists a smooth compact connected spacelike surface $\Sigma$ with boundary $\partial \Sigma  =S$ then the total electric and magnetic charges on $\Sigma$ must vanish. 
\end{theorem}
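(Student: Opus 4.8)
The plan is to feed the hypotheses directly into the machinery already assembled in Theorem \ref{thm:Ipos} and Lemma \ref{lem:charged_dust}, reduce the enclosed charge to an integral over $\Sigma$, and then exhibit that integral as a boundary term on $S$ that vanishes because $V\equiv 0$ there.

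First I would record what the assumptions buy us. Since $S$ is supersymmetric, Lemma \ref{lem:vanishI} gives $\hat I_S[\epsilon]=0$. As $\rho'>0$ on $S$, this is exactly the equality case of Theorem \ref{thm:Ipos}, so its conclusions (ii) and (iii) hold: the spinor $\epsilon$ extends to $\Sigma$ with $h^a_b\hat\nabla_a\epsilon=0$, and the matter satisfies \eqref{Sigma_matter} on $\Sigma$. Arguing as in Corollary \ref{cor:nottrapped} (cf. Lemma \ref{lem:nonzero}), $\epsilon$ is nowhere zero on $\Sigma$, so by \eqref{Xnorm} the vector $X^a$ of \eqref{XVdef} is a nowhere-vanishing, future-directed causal field on $\Sigma$. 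Because \eqref{Sigma_matter} holds, Lemma \ref{lem:charged_dust} shows the matter is charged dust, $J_a=\chi\,\mathrm{Re}(V)X_a$ and $\tilde J_a=-\chi\,\mathrm{Im}(V)X_a$ with $\chi\ge 0$.

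Next I would express the charges as fluxes. Applying Stokes' theorem to the Maxwell equations, the quantities in \eqref{QPdef} satisfy $Q=\int_\Sigma\star J$ and $P=\int_\Sigma\star\tilde J$. Inserting the dust currents and writing $N^a$ for the future unit normal to $\Sigma$, these combine (up to orientation conventions) into
\be
 Q+iP=\int_\Sigma \chi\,\bar V\,(N^a X_a)\,dA .
\ee
Here $\chi\ge 0$ and, as $N^a$ and $X^a$ are future-directed causal with $X^a\neq 0$, one has $N^aX_a>0$ everywhere on $\Sigma$. So the weight $\chi\,N^aX_a$ is nonnegative, and the theorem amounts to showing this integral nonetheless vanishes; in particular $V$ cannot keep a fixed phase on the support of $\chi$, so a genuine use of the global structure of $\Sigma$ is unavoidable. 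Indeed the result cannot follow from $S$ alone: a cross-section of extremal Reissner--Nordstr\"om has $V\equiv 0$ on $S$ (Lemma \ref{lem:marg}) yet nonzero charge, so any valid argument must exploit the supercovariant constancy of $\epsilon$ throughout $\Sigma$.

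The crux is therefore to show the integral vanishes, and here I expect the main obstacle. My approach would be to differentiate the bilinears \eqref{XVdef} using \eqref{supercov} together with $h^a_b\hat\nabla_a\epsilon=0$, producing Tod-type algebraic identities, valid in directions tangent to $\Sigma$, for the tangential derivatives of $X^a$ and $V$; schematically the projection of $\nabla_{[a}X_{b]}$ onto $\Sigma$ is the Maxwell field weighted by $\mathrm{Re}(V)$ and $\mathrm{Im}(V)$. Using these to recognise the integrand $\chi\,\bar V\,(N^aX_a)$ as a total divergence on $\Sigma$, Stokes' theorem should convert the charge into a boundary integral over $S=\partial\Sigma$ whose integrand is proportional to $V$; since $V\equiv 0$ on $S$ by Lemma \ref{lem:marg}, the boundary term vanishes, giving $Q=P=0$. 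The hard part will be organising this reduction cleanly: the supersymmetry holds only tangentially to $\Sigma$, so the identities are available only in projected form, and—as the extremal Reissner--Nordstr\"om example shows—the relation between $\nabla X$ and $F$ degenerates precisely on $S$ (where $V=0$). The manipulation must therefore be arranged so that the surviving boundary integrand is manifestly proportional to $V|_S$ and not to $F|_S$, which does not vanish. I expect this to parallel the Nester--Witten/Sen--Witten computations already used in Theorem \ref{thm:Ipos} and in \cite{Dougan:1991zz,Gibbons:1982fy,Gibbons:1982jg}, now applied to the charge aspect rather than the mass aspect.
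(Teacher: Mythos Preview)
Your opening moves are exactly right: Theorem \ref{thm:Ipos} gives the extension with $h^a_b\hat\nabla_a\epsilon=0$, Lemma \ref{lem:charged_dust} gives the dust form of the currents, and Lemma \ref{lem:marg} gives $V\equiv 0$ on $S$. You have also correctly diagnosed that the crux is a global argument on $\Sigma$ using $V|_S=0$, and that the extremal Reissner--Nordstr\"om sphere shows this cannot be a purely local deduction.

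However, the step where you ``recognise the integrand $\chi\bar V(N^aX_a)$ as a total divergence'' is the genuine gap, and the paper does \emph{not} close it the way you suggest. You only have tangential supercovariant constancy on $\Sigma$, so $X^a$ is not a Killing field of the physical spacetime and the Tod identity \eqref{nabX} is only available in projected form; you yourself note this obstacle but do not resolve it. The paper's resolution is to pass to the \emph{Killing development} $\hat M$ of the data on $\mathrm{int}(\Sigma)$ (Appendix~B of \cite{Chrusciel:2005ve}): an auxiliary spacetime with the same induced data on $\Sigma$, in which $\epsilon$ extends to a genuinely supercovariantly constant spinor and $X^a$ is an honest Killing vector. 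There one can apply the full identity \eqref{nabX}, so $dX\to 0$ on approach to $S$ (since $V|_S=0$), and hence the Komar integral $\int_S\star dX$ vanishes. Stokes' theorem converts this to $\int_\Sigma \hat R_{ab}N^aX^b=0$; via the Einstein equation and Tod's form \eqref{emJKillingdev} this integrand is a sum of two nonnegative terms, $T^{(Max)}_{ab}N^aX^b$ and $\tfrac12\hat\chi\,X^2\,N\cdot X=\tfrac12\hat\chi\,|V|^2\,N\cdot X$, which must therefore vanish \emph{pointwise}. Thus $\hat\chi V=0$ on $\Sigma$, so the currents vanish pointwise and the enclosed charges are zero.

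Note the logical shape is different from yours: rather than writing the charge integral itself as a boundary term, the paper shows a \emph{different} integral (the Komar mass aspect) is a vanishing boundary term, and its nonnegative integrand then forces $\chi V=0$ pointwise, which kills the currents. The Killing development is what makes the Komar identity available; without it, or an equivalent device, your divergence argument remains a hope rather than a proof.
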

\begin{proof}
We shall refer to our spacetime $M$ as the physical spacetime. From Theorem \ref{thm:Ipos} the spinor $\epsilon$ on $S$ can be extended to a spinor field satisfying $h^a_b \hat{\nabla}_a \epsilon=0$ on $\Sigma$. In Appendix B of \cite{Chrusciel:2005ve} it is shown that one can construct a spacetime $\hat{M}$, containing a Maxwell field, the {\it Killing development} of the initial data on ${\rm int}(\Sigma)$ with the following properties. (i) $\hat{M}$ contains a surface diffeomorphic to $\Sigma$, which we shall also denote as $\Sigma$; (ii) the induced metric, extrinsic curvature, and Maxwell field on ${\rm int}(\Sigma)$ in $\hat{M}$ agree with those in the physical spacetime; (iii) $\epsilon$ extends to a supercovariantly constant spinor on $\hat{M}$. Using (iii) we extend the definitions \eqref{XVdef} of $X^a$ and $V$ to $\hat{M}$ and (iii) implies that $X^a$ is a Killing vector field that preserves the Maxwell field \cite{Tod:1983pm,Chrusciel:2005ve}. 

At this stage $\hat{M}$ is just an auxiliary, unphysical, spacetime (but see the remark following this proof). From the Einstein tensor we can define the energy-momentum tensor $\hat{T}_{ab}$ of $\hat{M}$ which we split into a Maxwell part $\hat{T}_{ab}^{(Max)}$ defined in the usual way and a ``matter'' part $\hat{T}_{ab}^{(m)}$. Similarly from the derivative of the Maxwell field we can define electric and magnetic currents $\hat{J}_a$ and $\hat{\tilde{J}}_a$ on $\hat{M}$. The existence of a supercovariantly constant spinor constrains these quantities. If $X^a$ is timelike at a point $p \in \hat{M}$ then $X^a$ is timelike in a neighbourhood of $p$. In this neighbourhood, the results of Tod \cite{Tod:1983pm} show that (compare Lemma \ref{lem:charged_dust})
\be
\label{emJKillingdev}
 \hat{T}^{(m)}_{ab} = \hat{\chi} X_a X_b \qquad \hat{J}_a = \hat{\chi} {\rm Re}(V) X_a \qquad \hat{\tilde{J}}_a = - \hat{\chi} {\rm Im}(V) X_a
 \ee
 for some function $\hat{\chi}$ invariant under the flow of $X^a$. If $X^a$ is null at $p$ and throughout a neighbourhood of $p$ the the above results also hold in this neighbourhood \cite{Tod:1983pm}. The remaining possibility is that $X^a$ is null at $p$ but every neighbourhood of $p$ contains points at which $X^a$ is timelike. In this case there exists a sequence of points $p_n \rightarrow p$ such that $X^a$ is timelike at each $p_n$. Then $X^a$ is timelike in a neighbourhood of each $p_n$ and so \eqref{emJKillingdev} holds at $p_n$. By continuity it also holds at $p$. This proves that \eqref{emJKillingdev} holds throughout $\hat{M}$. 

Existence of a supercovariantly constant spinor does not constrain the sign of $\hat{\chi}$. To do this we need to relate these results to the physical spacetime, in which the local mass-charge inequality is satisfied. Using (ii), the Hamiltonian constraint on $\Sigma$ implies that $\hat{T}_{ab} N^a N^b = T_{ab} N^a N^b$ on $\Sigma$ (which has unit normal $N^a$) and hence $\hat{T}^{(m)}_{ab} N^a N^b = {T}^{(m)}_{ab} N^a N^b$ on $\Sigma$ (since the Maxwell field on $\Sigma$ is the same in $\hat{M}$ and  in the physical spacetime). The weak energy condition (first equation of \eqref{local_mass_charge}) implies that the RHS is non-negative which implies $\hat{\chi} \ge 0$.

Now we impose the condition that $\rho \equiv 0$ on $S$. From Lemma \ref{lem:marg} we know that $V \equiv 0$ on $S$ and $X^a$ is null on $S$. Now in $\hat{M}$ we have the equation \cite{Tod:1983pm,Chrusciel:2005ve} 
\be
\label{nabX}
 \nabla_a X_b = \bar{V} \phi_{AB}\epsilon_{A'B'} + V \bar{\phi}_{A'B'} \epsilon_{AB} 
\ee
Since $V$ vanishes on $S$ it appears that $\nabla_a X_b$ vanishes on $S$. However, this is a little delicate since $S$ is at the boundary of $\Sigma$ and we haven't shown that $X_a$ is differentiable in directions transverse to $S$. We could work with $1$-sided derivatives but instead we can proceed as follows. Consider a set of smooth $2$-surfaces $S_\delta \subset {\rm int} (\Sigma)$ such that $S_\delta \rightarrow S$ as $\delta \rightarrow 0$. Let $\Sigma_\delta$ be the region of $\Sigma$ enclosed by $S_\delta$ so $\Sigma_\delta \rightarrow \Sigma$ as $\delta \rightarrow 0$.  \eqref{nabX} implies $dX|_{S_\delta} \rightarrow 0$ as $\delta \rightarrow 0$. Hence we have the Komar-like identity following from the fact that $X^a$ is a Killing field on $\hat{M}$:
\be
 0 = \lim_{\delta \rightarrow 0} \int_{S_\delta} \star dX =  \lim_{\delta \rightarrow 0} \int_{\Sigma_\delta} d \star dX =  \pm 2 \lim_{\delta \rightarrow 0} \int_{\Sigma_\delta} \hat{R}_{ab} N^a X^b = \pm 2 \int_\Sigma \hat{R}_{ab} N^a X^b
\ee
where $\hat{R}_{ab}$ is the Ricci tensor on $\hat{M}$ and $\pm$ is because we haven't been careful to keep track of signs here. Hence
\bea
 0 &=& \int_\Sigma \hat{R}_{ab} N^a X^b = 8 \pi \int_\Sigma \left[ T_{ab}^{(Max)} N^a X^b + \left( \hat{T}_{ab}^{(m)} - \frac{1}{2} \hat{T}^{c(m)}_c \hat{g}_{ab} \right) N^a X^b \right] \nonumber \\ &=& 8\pi  \int_\Sigma \left[ T_{ab}^{(Max)} N^a X^b + \frac{1}{2} \hat{\chi} X^2 N_a X^a \right] 
\eea
where we used the fact (ii) that the Maxwell field on $\Sigma$ is the same in $\hat{M}$ and in the physical spacetime, and
$T^{a(Max)}_a = 0$. Both terms in the integrand are non-negative (the Maxwell field satisfies the dominant energy condition) and so must both vanish. In particular on $\Sigma$ we have $0 = \hat{\chi}X^2 = \hat{\chi}V\bar{V}$ and hence $\hat{\chi} V= 0$ so $\hat{J}_a = \hat{\tilde{J}}_a = 0$ on $\Sigma$. In particular, the electric and magnetic charge densities on $\Sigma$ vanish in $\hat{M}$ (and in fact throughout $\hat{M}$ by the Killing symmetry). Hence $\int_{S_\delta} F = \int_{S_\delta} \star F = 0$ holds in $\hat{M}$. These results also hold in the physical spacetime because the Maxwell field on $\Sigma$ is the same in both spacetimes. Taking the limit $\delta \rightarrow 0$ shows that the electromagnetic charges of $S$ vanish, so the total electric and magnetic charges on $\Sigma$ must vanish.\footnote{We note that combining the Hamiltonian constraint with Lemma \ref{lem:charged_dust} gives $\hat{\chi} (N\cdot X)^2 = \chi (N \cdot X)^2$ so $\hat{\chi} = \chi$. Hence $\chi V \bar{V}=0$ and so from Lemma \ref{lem:charged_dust} we also have $J^a = \tilde{J}^a=0$ on $\Sigma$. This is non-trivial because the spatial components of the currents are {\it not} determined by constraint equations on $\Sigma$ and so {\it a priori} might differ in the physical and unphysical spacetimes.}

\end{proof}

Note that we could replace the condition $\rho \equiv 0$ in the statement of this theorem with the weaker condition that $X^a$ is null on $S$ (Lemma \ref{lem:marg} shows that the former implies the latter but they are not equivalent). Even  if we drop the assumption that $\rho \equiv 0$ or that $X^a$ is null on $S$ then we expect that, within $D(\Sigma)$ (the domain of dependence of $\Sigma$), the metric and Maxwell field will be diffeomorphic to those in the corresponding region $\hat{D}(\Sigma)$ of the Killing development $\hat{M}$ discussed above, and so there will exist a supercovariantly constant spinor field throughout $D(\Sigma)$. However, we shall not attempt to give a proof of this result (see \cite{szabados:1993} for the case of vanishing Maxwell field).

As an example of a surface satisfying the conclusions of this theorem, consider a spacetime admitting a supercovariantly constant spinor for which $X^a$ is globally null. Tod's analysis \cite{Tod:1983pm} shows that such a spacetime is a pp-wave with vanishing electric and magnetic currents. Any compact spacelike $2$-surface in such a spacetime is a supersymmetric surface with vanishing electric and magnetic charges. In fact, it seems likely that if the conditions of our theorem are satisfied (replacing $\rho \equiv 0$ with $X^a$ null) then the spacetime in $D(\Sigma)$ must always be such a pp-wave (or flat), as in \cite{szabados:1993}.

We can now state a version of the third law of black hole mechanics for supersymmetric black holes:

\begin{corollary}[Third law for supersymmetric black holes.]
\label{cor:3rdlaw}
let $S$ be a smooth, compact, connected, 2-surface in a smooth spacetime $M$ with $S = \partial \Sigma$ where $\Sigma$ is a smooth, compact, connected, spacelike hypersurface. Let $\tilde{M}$ be an asymptotically flat black hole spacetime admitting a supercovariantly constant spinor, for which the future event horizon ${\cal H}^+$ is a Killing horizon. Assume that there exists a diffeomorphism $\Phi$ from a neighbourhood  $U$ of $S$ in $M$ to a neighbourhood of a cross-section $\tilde{S}$ of ${\cal H}^+$ in $\tilde{M}$ such that $\Phi(S) = \tilde{S}$, $\Phi$ maps $\Sigma\cap U$ to the black hole interior, and $\Phi$ maps the (spacetime) metric, extrinsic curvature, and Maxwell field on $S$ to the corresponding quantities on $\tilde{S}$. Assume that the ingoing null geodesics normal to $\tilde{S}$ are strictly converging. If the matter satisfies the local mass-charge inequality then the electric and magnetic charges of $S$ and $\tilde{S}$ are zero. In particular, $\tilde{S}$ cannot be a cross-section of an extremal Reissner-Nordstr\"om horizon. 
\end{corollary}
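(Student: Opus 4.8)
The plan is to verify that the surface $S$ in the physical spacetime $M$ satisfies every hypothesis of Theorem \ref{thm:spinor} and then simply read off the conclusion. The key observation is that the two nontrivial properties required by that theorem—namely that $S$ be a supersymmetric surface, and that its spin coefficients obey $\rho\equiv 0$ and $\rho'>0$—are determined \emph{entirely} by the data that $\Phi$ is assumed to preserve, i.e. the (spacetime) metric, extrinsic curvature, and Maxwell field on $S$. Indeed, the defining condition for a supersymmetric surface is the existence of a nonzero $\epsilon$ with $t^a\hat\nabla_a\epsilon=0$ for all $t^a$ tangent to $S$, and $t^a\hat\nabla_a = t^a\nabla_a + \tfrac14 F_{bc}\gamma^b\gamma^c(t^a\gamma_a)$ acting in tangent directions involves only the induced connection on $S$, the second-fundamental-form data governing how the null frame rotates as one moves within $S$, and the Maxwell field on $S$. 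Equivalently, equations \eqref{hol_conds1}--\eqref{anti_hol_conds2} are written purely in terms of the GHP operator $\eth$, the coefficients $\rho,\sigma,\rho',\sigma'$ and the normal connection, and $\phi$, all of which are fixed by the metric and extrinsic curvature of $S$ together with $F|_S$.

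First I would establish these properties for the model cross-section $\tilde S$ in $\tilde M$. Since $\tilde M$ admits a globally defined, not identically zero, supercovariantly constant spinor, the argument given after Lemma \ref{lem:nonzero} shows that every spacelike $2$-surface in $\tilde M$—in particular $\tilde S$—is a supersymmetric surface. Because ${\cal H}^+$ is a Killing horizon it is non-expanding, so the expansion (and shear) of its null generators vanish and hence $\rho\equiv 0$ on $\tilde S$; the hypothesis that the ingoing null geodesics normal to $\tilde S$ are strictly converging gives $\rho'>0$ on $\tilde S$. Transferring these statements through $\Phi$—and using the assumption that $\Phi$ maps $\Sigma\cap U$ into the black hole interior, so that the ingoing null normal of $\tilde S$ is identified with the null normal of $S$ pointing into $\Sigma$ and the sign conventions for $\rho,\rho'$ are correctly matched—I conclude that $S$ is a supersymmetric surface in $M$ with $\rho\equiv 0$ and $\rho'>0$.

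With these facts in hand, $S$ is a smooth compact connected supersymmetric surface with $\rho'>0$, $\rho\equiv 0$ in a spacetime satisfying the local mass-charge inequality \eqref{local_mass_charge}, and $S=\partial\Sigma$ with $\Sigma$ smooth, compact, connected and spacelike. Theorem \ref{thm:spinor} then yields that the total electric and magnetic charges on $\Sigma$ vanish. By the definition \eqref{QPdef} together with Stokes' theorem applied to the Maxwell equations $d\star F=4\pi\star J$ and $dF=4\pi\star\tilde J$, the charges enclosed by $S$ equal the total charges on $\Sigma$, so the electric and magnetic charges of $S$ vanish. Since $\Phi$ preserves both the metric and the Maxwell field on $S$, the pullback of $\star F + iF$ to $\tilde S$ coincides with that on $S$, and therefore the charges of $\tilde S$ equal those of $S$ and also vanish. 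As a cross-section of an extremal Reissner-Nordstr\"om horizon carries nonzero electric and/or magnetic charge, $\tilde S$ cannot be such a cross-section.

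The step I expect to be the main obstacle is the transfer argument of the first two paragraphs: one must be careful that ``extrinsic curvature of $S$'' is understood to comprise the full extrinsic geometry of the codimension-two surface—both second fundamental forms (for the two null normals) \emph{and} the normal-bundle connection—since this is precisely the data needed to reconstruct the operator $\eth$ and all the coefficients entering \eqref{hol_conds1}--\eqref{anti_hol_conds2}, and to fix the signs of $\rho,\rho'$ relative to the chosen pair of null normals. Once it is granted that $\Phi$ preserves all of this data (and the orientation condition that $\Sigma\cap U$ maps to the interior), the supersymmetry condition and the expansions are manifestly intrinsic to $S$ and transfer immediately, and the remaining steps are routine applications of Theorem \ref{thm:spinor} and Stokes' theorem. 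A secondary point worth stating explicitly is the standard fact that a Killing horizon is non-expanding, which is what supplies $\rho\equiv 0$ on $\tilde S$.
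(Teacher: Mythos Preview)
Your proposal is correct and follows essentially the same approach as the paper: transfer the null frame, spin coefficients, $\eth$-operator, and Maxwell spinor from $\tilde S$ to $S$ via $\Phi$, verify that $S$ is supersymmetric with $\rho\equiv 0$, $\rho'>0$, and then invoke Theorem \ref{thm:spinor}. The paper handles your flagged ``main obstacle'' in exactly the way you anticipate, noting explicitly that the connection terms in $\eth$ not fixed by the intrinsic geometry are $n^b m^a\nabla_a\ell_b$ and $n^b\bar m^a\nabla_a\ell_b$, which depend only on tangential derivatives of the null normals and hence are preserved by $\Phi$; your additional remarks on why the charges of $\tilde S$ vanish and the final Reissner--Nordstr\"om exclusion are left implicit in the paper but are indeed immediate.
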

\noindent 
A few comments before we prove this result. First, the existence of a supercovariantly constant spinor in $\tilde{M}$ implies that $X^a$ is a causal Killing vector field. If this Killing field has complete orbits then it generates a symmetry of the spacetime, and so must preserve ${\cal H}^+$, which implies that it is tangent to the generators of ${\cal H}^+$, so ${\cal H}^+$ is indeed a Killing horizon. In particular this assumption is true if the black hole spacetime is a maximal analytic extension of extremal Reissner-Nordstr\"om. Similarly the assumption about ingoing null geodesics being converging is true for extremal Reissner-Nordstr\"om. However, the result also covers a larger class of supersymmetric black hole spacetimes, e.g., extremal Reissner-Nordstr\"om in equilibrium with supersymmetric charged matter outside the horizon. 

Second, the assumed properties of $\Phi$ capture the idea that $S$ is ``the same'' as a horizon cross-section of a supersymmetric black hole. We are considering the possibility that such a black hole forms in gravitational collapse, with the interior at a particular instant of time corresponding to $\Sigma$. Note that we assume that $\Phi$ maps the full spacetime metric on $S$ to that on $\tilde{S}$, not just the induced metric. 

Third, if we drop the assumption that the local mass-charge inequality is satisfied then the corresponding statement is {\it not} true: the spacetimes constructed by Kehle and Unger describe gravitational collapse to form a black hole that is exactly extremal Reissner-Nordstr\"om after a finite advanced time (Fig. \ref{fig:KU}) and so, after this time, any cross-section $S$ of the horizon would satisfy the assumptions of our theorem.

\begin{proof}
Given a null tetrad on $\tilde{S}$ we can use $\Phi$ to pull it back to define a null tetrad on $S$. The assumed properties of $\Phi$ imply that the ``inward'' null normal of $\tilde{S}$ pulls back to the inward null normal $n^a$ on $S$. It now follows that $\Phi$ maps the derivative $\eth$ on $S$ to the corresponding quantity on $\tilde{S}$. This is because the ``connection terms'' in the definition of $\eth$ that are not determined by the intrinsic geometry of $S$ are $n^b m^a \nabla_a \ell_b$ and $n^b \bar{m}^a \nabla_a \ell_b$, which depend only on tangential derivatives of $\ell^a$ and $n^a$.  The assumption that $\Phi$ maps the extrinsic curvature and Maxwell field on $S$ to the corresponding quantities on $\tilde{S}$ implies that, on $S$, the quantities $\rho,\rho',\sigma,\sigma'$ and $\phi_{AB}$ are pull-backs of the corresponding quantities on $\tilde{S}$. In particular on $S$ we have $\rho'>0$ (because the ingoing null geodesics normal to $\tilde{S}$ are strictly converging) and $\rho \equiv 0$ (because ${\cal H}^+$ is a Killing horizon, so its generators have vanishing expansion and shear). 

It now follows that equations \eqref{hol_conds1}, \eqref{hol_conds2}, \eqref{anti_hol_conds1}, \eqref{anti_hol_conds2} take exactly the same form on $S$ as on $\tilde{S}$. On $\tilde{S}$ we have a non-trivial solution $\tilde{\lambda}_{A'}$, $\tilde{\mu}_{A'}$ of these equations so we must have a non-trivial solution $\lambda_{A'}$, $\mu_{A'}$ of these equations on $S$. (This solution is a pull-back of the solution on $\tilde{S}$.) Hence $S$ is a supersymmetric surface.  Theorem \eqref{thm:spinor} now implies that the electromagnetic charges of $S$ vanish.

\end{proof}

One can construct very artificial examples of supersymmetric black holes that evade this version of the third law as follows. Any spacetime belonging to the Majumdar-Papapetrou family admits a supercovariantly constant spinor with timelike $X^a$ \cite{Tod:1983pm}. This family contains multi-extremal-Reissner-Nordstr\"om solutions but it also contains non-black-hole solutions describing static configurations of supersymmetric charged matter with gravitational and electromagnetic forces cancelling (see e.g. \cite{bonner:1975}). In such a spacetime, delete a spacelike disc of points. This creates a non-trivial black hole region to the past of the deleted set. This black hole has compact interior and non-zero charge. It evades our third law because ${\cal H}^+$ is not a Killing horizon, and the horizon generators are expanding, i.e.,  $\rho<0$. This is possible because $X^a$ does not have complete orbits, i.e., it does not generate a symmetry of the spacetime and therefore does not have to preserve ${\cal H}^+$. 
This example is unphysical because it is not a Cauchy development of initial data. If we pick a surface in this spacetime to the past of the black hole region then the maximal Cauchy development of the data on this surface is not the black hole spacetime, it is the original spacetime, with the deleted points reinstated.

There is an alternative version of the third law that holds even for this type of black hole. To see this, make the same asumptions as in Corollary \ref{cor:3rdlaw} but do not require ${\cal H}^+$ to be a Killing horizon. Then $S$ is a supersymmetric surface (as explained in the proof of Corollary \ref{cor:3rdlaw}) so Corollary \ref{cor:nottrapped} shows that $\Sigma$ cannot contain a trapped surface. This proves that a black hole with a compact interior containing a trapped surface cannot ``become supersymmetric'' in any theory with matter satisfying the local mass-charge inequality. This is a version of the third law in the spirit of \cite{Israel:1986gqz}. The work of Kehle and Unger proves that the corresponding statement is false if the matter is a massless charged scalar field \cite{Kehle:2022uvc}. 

\section{A quasi-local mass-charge inequality}

\label{sec:quasilocal}

Given a smooth spacelike compact $2$-surface $S$, Dougan and Mason showed how to define a quasi-local $4$-momentum $P^a$ and mass $M$ for $S$ that satisfies a number of nice properties \cite{Dougan:1991zz}. If $S = \partial \Sigma$ with $\Sigma$ a compact spacelike surface and matter satisfies the dominant energy condition then $P^a$ is non-spacelike (so $M \ge 0$) and vanishes if, and only if, $D(\Sigma)$ is flat \cite{szabados:1993}.

In this section we will use a supercovariant version of Dougan and Mason's approach to define a ``renormalized'' quasi-local mass $\varpi$, differing from $M$ by a term involving the electromagnetic field. Unlike $M$, $\varpi$ is equal to the ADM mass when evaluated for a symmetry $2$-sphere in the Reissner-Nordstr\"om solution. If $S= \partial \Sigma$ with $\Sigma$ a compact spacelike surface and matter satisfies the local mass-charge inequality \eqref{local_mass_charge} then we will show that $\varpi^2 \ge Q^2 + P^2$, where $Q$ and $P$ are the electric and magnetic charges of matter on $\Sigma$,  with equality only if there exists a supercovariantly constant spinor on $\Sigma$.

\subsection{The Dougan-Mason quasilocal mass}

We start by recalling the definition of the Dougan-Mason (DM) $4$-momentum and mass \cite{Dougan:1991zz}. Let $S$ be a smooth connected compact spacelike $2$-surface (e.g. a 2-sphere). Choose a spinor basis $(o^A,i^A)$ such that $\ell^a=o^A \bar{o}^{\u{A}'}$ and $n^a=i^A \bar{i}^{\u{A}'}$ are the ``outward'' and ``inward'' null normals to $S$. On $S$ consider ``holomorphic'' 2-component spinors, i.e., those satisfying the following equation on $S$:
\be
\label{hol_eq}
\bar{m}^a \nabla_a \lambda_{B'} = 0
\ee
This is equivalent to
\begin{subequations}
\be
\label{DM1}
  \bar{\eth} \lambda_{1'}  + \rho' \lambda_{0'}= 0 
\ee
\be
 \label{DM2}
\bar{\eth} \lambda_{0'} + \bar{\sigma} \lambda_{1'} = 0 
\ee
\end{subequations}
DM argue that, generically, \eqref{hol_eq} admits a $2$-dimensional space of solutions and two linearly independent solutions are also pointwise linearly independent (as 2-component spinors). However, there are ``exceptional'' situations where the solution space might be larger or the two solutions might fail to be pointwise linearly independent. Following DM, we assume that $S$ is non-exceptional.\footnote{An example of an exceptional case is a sphere on the white hole horizon of the Schwarzschild solution, which has $\rho'=\sigma=0$ (assuming $n^a$ is tangent to the white hole horizon generators). In this case $\lambda_{1'}$ is an arbitrary linear combination of spin-weighted spherical harmonics (see \eqref{spin_harm} below) ${}_{1/2} Y_{1/2 \pm 1/2}$ and $\lambda_{0'}$ is an arbitrary linear combination of ${}_{-1/2} Y_{1/2 \pm 1/2}$, so there is a 4d space of solutions. However, one can instead work with the DM definition using anti-holomorphic spinors, defined by relacing $\bar{m}^a$ with $m^a$ in \eqref{hol_eq}. Both definitions fail at the bifurcation surface \cite{Szabados:2009eka}.}

 Let $\lambda_{A'}^{\underline{A}'}$ be two linearly independent solutions of \eqref{hol_eq} where $\underline{A}' \in \{0',1'\}$ labels the solutions. Define a ``quasitranslation'' to be a 4-vector field on $S$ of the form $K_{AA'} = K_{\underline{AA}'} \lambda_{A'}^{\underline{A}'} \bar{\lambda}^{\underline{A}}_A$ where $K_{\u{AA}'}$ are constants. The energy associated with $K_{AA'}$ is then defined as $P^{\underline{AA}'}  K_{\underline{AA}'}$ where the {\it Dougan-Mason 4-momentum} is\footnote{One can verify that this expression is real \cite{Penrose:1986ca}.}
\be
 P^{\underline{AA}'} = \frac{i}{4\pi} \int_S \bar{\lambda}_A^{\u{A}} \nabla_{BB'} \lambda_{A'}^{\u{A}'} \, \theta^{BB'} \wedge \theta^{AA'}
\ee
where $\theta^{AA'}$ are a basis of $1$-forms. Now define
\be
 \epsilon^{\underline{A}' \underline{B}'} = \epsilon^{A'B'} \lambda_{A'}^{\underline{A}'} \lambda_{B'}^{\underline{B}'}
\ee
which is constant as a consequence of \eqref{hol_eq} and Liouville's theorem. The assumption that  $\lambda_{A'}^{\underline{0}'}$ and $\lambda_{A'}^{\underline{1}'}$ are pointwise linearly independent implies that $\epsilon^{\underline{A}' \underline{B}'}$ is non-zero. Define $\epsilon_{\underline{A}' \underline{B}'}$ to be the inverse of  $\epsilon^{\underline{A}' \underline{B}'}$, i.e., $\epsilon_{\underline{A}' \underline{B}'}  \epsilon^{\underline{B}' \underline{C}'} = -\delta^{\underline{C}'}{}_{\underline{A}'}$. The {\it Dougan-Mason quasilocal mass} is then defined by
\be
\label{mdef}
 M^2 \equiv  P^{\underline{AA}'}  P^{\underline{BB}'}\bar{ \epsilon}_{\underline{A}\underline{B}}  \epsilon_{\underline{A}'\underline{B}'} 
\ee

\begin{theorem}[Dougan-Mason  \cite{Dougan:1991zz}]
\label{thm:DM}
 Assume that $S = \partial \Sigma$ where $\Sigma$ is a smooth compact connected spacelike surface, that the dominant energy condition is satisfies on $\Sigma$, and that $\rho' \ge 0$ on $S$. Then $P^{\underline{AA}'}$ is a future-directed causal vector (so $M \ge 0$ is well-defined by \eqref{mdef}).
\end{theorem}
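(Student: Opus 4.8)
The plan is to reduce the theorem to the positive semi-definiteness of a single Hermitian quadratic form and then to establish that positivity by the Witten--Nester argument, which is exactly the $\phi_{AB}\equiv 0$, $\mu\equiv 0$ specialisation of the proof of Theorem~\ref{thm:Ipos}. Recall that a Hermitian spinor $P^{\u{AA}'}$ represents a future-directed causal vector if, and only if, the form $c_{\u{A}'}\mapsto P^{\u{AA}'}\bar{c}_{\u{A}}c_{\u{A}'}$ is non-negative for every $c_{\u{A}'}$. First I would observe that, for a general holomorphic spinor $\lambda_{A'}=c_{\u{A}'}\lambda_{A'}^{\u{A}'}$ (with $\bar{\lambda}_A=\bar{c}_{\u{A}}\bar{\lambda}_A^{\u{A}}$), the definition of $P^{\u{AA}'}$ gives
\be
 P^{\u{AA}'}\bar{c}_{\u{A}}c_{\u{A}'}=\frac{i}{4\pi}\int_S \bar{\lambda}_A \nabla_{BB'}\lambda_{A'}\,\theta^{BB'}\wedge\theta^{AA'}=:I_S[\lambda],
\ee
where $I_S[\lambda]$ is the integral of the uncharged Nester--Witten $2$-form, i.e.\ the functional \eqref{hatIS} with $\mu\equiv 0$ and $\phi_{AB}\equiv 0$. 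The reality of $P^{\u{AA}'}$ (noted in the footnote to its definition) forces this form to be Hermitian, so the entire theorem is equivalent to the single scalar inequality $I_S[\lambda]\ge 0$ for every holomorphic $\lambda$.

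To prove $I_S[\lambda]\ge 0$ I would run the argument of Theorem~\ref{thm:Ipos} verbatim, deleting the $\mu$-spinor and all $\bar{\phi}$ terms. Since $\lambda$ satisfies the holomorphicity conditions \eqref{DM1}--\eqref{DM2} --- precisely \eqref{hol_conds1}--\eqref{hol_conds2} with $\mu,\phi$ removed --- the uncharged specialisation of Lemma~\ref{lem:hatI2} (i.e.\ of \eqref{hatISfull}) gives
\be
 I_S[\lambda]=\int_S\left(\rho'\,|\lambda_{0'}|^2+\rho\,|\lambda_{1'}|^2\right).
\ee
This is not manifestly signed, because $\rho$ need not be of fixed sign, so the next step is the Witten--Nester reduction over $\Sigma$. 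Let $\tilde{\lambda}$ solve the (uncharged) Sen--Witten equation \eqref{senwitten}, with $\hat{\nabla}$ replaced by $\nabla$, subject to the boundary data $\tilde{\lambda}_{1'}=\lambda_{1'}$ on $S$. Rearranging $I_S[\tilde{\lambda}]$ using \eqref{DM1}, the boundary condition, and an integration by parts reproduces the identity \eqref{positivity} with the $\mu$-term dropped,
\be
 I_S[\lambda]=I_S[\tilde{\lambda}]+\int_S\rho'\,|\tilde{\lambda}_{0'}-\lambda_{0'}|^2.
\ee
Finally, converting $I_S[\tilde{\lambda}]=\int_\Sigma d\Lambda[\tilde{\lambda}]$ into a bulk integral and using the Sen--Witten equation together with the Hamiltonian and momentum constraints yields a manifestly non-negative expression whenever the dominant energy condition holds on $\Sigma$ \cite{Witten:1981mf,Nester:1981bjx}. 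With $\rho'\ge 0$ on $S$ the boundary term is also non-negative, so $I_S[\lambda]\ge 0$, which completes the argument. Once $P^{\u{AA}'}$ is known to be future-directed causal, $M^2=P^{\u{AA}'}P^{\u{BB}'}\bar{\epsilon}_{\u{A}\u{B}}\epsilon_{\u{A}'\u{B}'}\ge 0$ and \eqref{mdef} is well defined.

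I expect the only genuine obstacle to be the bulk positivity step, namely the Witten--Nester rewriting of $\int_\Sigma d\Lambda[\tilde{\lambda}]$ as a sum of squares; this depends on the existence of the Sen--Witten solution $\tilde{\lambda}$ on the compact manifold-with-boundary $\Sigma$ (the uncharged counterpart of the existence statement deferred to Appendix~\ref{app:spinor}) and on the constraint equations. This is the same ingredient that the supercovariant proof of Theorem~\ref{thm:Ipos} imports from \cite{Gibbons:1982fy,Gibbons:1982jg}, and here one needs only its original uncharged form. Everything else --- the identification $I_S[\lambda]=P^{\u{AA}'}\bar{c}_{\u{A}}c_{\u{A}'}$, the reduction to the boundary formula, and the integration by parts leading to \eqref{positivity} --- is routine algebra already carried out in the charged setting, so the present theorem follows as a strict specialisation.
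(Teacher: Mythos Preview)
Your proposal is correct and follows essentially the same route as the paper's own proof: identify $P^{\u{AA}'}\bar c_{\u{A}}c_{\u{A}'}$ with the Nester--Witten functional $I_S[\lambda]$ for a holomorphic $\lambda$, rewrite $I_S[\lambda]$ using \eqref{DM1} as $\int_S(\rho'|\lambda_{0'}|^2+\rho|\lambda_{1'}|^2)$, extend via a Sen--Witten solution with boundary data $\tilde\lambda_{1'}=\lambda_{1'}$ to obtain the splitting $I_S[\lambda]=I_S[\tilde\lambda]+\int_S\rho'|\tilde\lambda_{0'}-\lambda_{0'}|^2$, and invoke the Witten bulk positivity under the dominant energy condition. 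The only cosmetic differences are ordering (the paper proves $I_S[\lambda]\ge 0$ first and then makes the connection to $P^{\u{AA}'}$, whereas you make the connection first) and a harmless $1/4\pi$ normalisation in your definition of $I_S$.
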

\begin{proof}
(The proof of theorem \ref{thm:Ipos} is essentially a ``supercovariant version'' of the first part of the following proof so we've seen most of the following steps already.) Let $\lambda_{A'}$ be a solution of \eqref{DM1} and introduce the Nester-Witten \cite{Witten:1981mf,Nester:1981bjx} 2-form (compare \eqref{hatLambda}, here we don't need to take the real part)
\be
 \Lambda_{AA'BB'} =-i\left( \bar{\lambda}_A \nabla_{BB'} \lambda_{A'} - i \bar{\lambda}_B \nabla_{AA'} \lambda_{B'} \right)
\ee
Now define the (real) functional 
\be
\label{ISlambda_def}
 I_S[\lambda]  \equiv \int_S \Lambda 
 \ee
Written out in GHP notation we have (compare \eqref{hatIS})
\be 
\label{ISlambda}
I_S[\lambda]=  \int_S \left[\bar{\lambda}_1 \left( \eth \lambda_{0'} + \rho \lambda_{1'} \right) - \bar{\lambda}_0 \left( \bar{\eth} \lambda_{1'} + \rho' \lambda_{0'} \right) \right]
\ee
Using \eqref{DM1} in \eqref{ISlambda}, integrating by parts, and using the complex conjugate of \eqref{DM1} gives (compare \eqref{hatISfull})
\be
\label{ISrho}
 I_S[\lambda] = \int_S\left( \rho'|\lambda_{0'}|^2 + \rho |\lambda_{1'}|^2 \right)
\ee
This is not obviously non-negative because $\rho$ might be negative (this is the usual situation). We now consider a solution $\tilde{\lambda}_{A'}$ of the Sen-Witten equation on $\Sigma$ with $\tilde{\lambda}_{1'}=\lambda_{1'}$ on $S$ to obtain $I_S[\lambda] = I_S[\tilde{\lambda}] + \int_S \rho' |\tilde{\lambda}_{0'} - \lambda_{0'}|^2$. Converting $I_S[\tilde{\lambda}]$ to an integal over $\Sigma$ and using the Sen-Witten equation and dominant energy condition gives $I_S[\tilde{\lambda}] \ge 0$. Using $\rho' \ge 0$ now implies $I_S[\lambda] \ge 0$.

We now assume $\lambda_{A'}$ satisfies \eqref{DM2} as well as \eqref{DM1} so we can write it in terms of our basis of solutions of as $\lambda_{A'} = \lambda_{\underline{A}'} \lambda_{A'}^{\underline{A}'}$ for some constants $\lambda_{\underline{A}'}$. We then have
\be
\label{PIS}
\frac{1}{4\pi} I_S[\lambda]=  P^{\underline{AA}'} \bar{\lambda}_{\underline{A}} \lambda_{\underline{A}'} 
\ee
It follows that the contraction of $P^{\underline{AA}'}$ with any future-directed null vector $\bar{\lambda}_{\underline{A}} \lambda_{\underline{A}'}$  is non-negative, so $P^{\underline{AA}'}$ is future-directed causal (or zero). 
\end{proof}

Equation \eqref{DM1} was used repeatedly in the above proof. However, the only role of \eqref{DM2} is to help define the 2d space of functions labelled by $\u{A}'$. Similarly we did not use \eqref{hol_conds2} in the proof of Theorem \ref{thm:Ipos}. It has been observed previously \cite{bergqvist:1992} that \eqref{DM2} can be modified (e.g. multiplying the final term by a suitable function) without affecting the positivity properties of the DM 4-momentum and mass or its agreement with other definitions (Bondi, ADM, Hawking) in the appropriate situations.

\subsection{Renormalized Dougan-Mason 4-momentum and mass}

We start by reformulating the definition of the DM 4-momentum. Assume $\rho'>0$ on $S$ (i.e. the ingoing null geodesics normal to $S$ are strictly converging) and use \eqref{DM1} to eliminate $\lambda_{0'}$ from \eqref{ISrho}:
\be
   I_S[\lambda] = \int_S\left( \frac{1}{\rho'} |\bar{\eth}\lambda _{1'}|^2 + \rho |\lambda_{1'}|^2 \right)
\ee
The proof of theorem \ref{thm:DM} shows that this expression is non-negative for {\it any} $\lambda_{1'}$ (as just remarked this part of the proof did not use \eqref{DM2}).\footnote{Equivalently, if we define a scalar operator $O$ on $S$ by
$
 O f = -\rho' \eth \left( {\rho'}^{-1} \bar{\eth} f \right) + \rho \rho' f 
$
then the above DM argument implies that $O$ is a non-negative operator when acting on quantities with the same boost and spin weights as $\lambda_{1'}$.} We can also write the 4-momentum entirely using $\lambda_{1'}$ by eliminating $\lambda^{\underline{A}'}_{0'}$ to obtain
\be
\label{P1}
 P^{\underline{AA'}} = \frac{1}{4\pi} \int_S \left( \frac{1}{\rho'} \eth \bar{\lambda}_1^{\underline{A}}  \bar{\eth}  \lambda^{\underline{A}'}_{1'} + \rho \bar{\lambda}_1^{\underline{A}} \lambda^{\underline{A}'}_{1'} \right)
\ee
Here $\lambda_{1'}^{\underline{A}'}$ can be defined as a pair of linearly independent solutions of the equation obtained by combining \eqref{DM1} and \eqref{DM2}:
\be
\label{DM2a}
 - \bar{\eth}\left( \frac{\bar{\eth} \lambda_{1'}}{\rho'} \right) + \bar{\sigma} \lambda_{1'} = 0
\ee
This equation has a 2d space of solutions for non-exceptional $S$ since we know that \eqref{DM1} and \eqref{DM2} have a 2d space of solutions. We also have
\be
\label{eps1}
 \epsilon^{\underline{A}' \underline{B}'} = -\frac{1}{\rho'} \left( \bar{\eth} \lambda^{\underline{A}'}_{1'} \lambda^{ \underline{B}'}_{1'} - \bar{\eth} \lambda^{\underline{B}'}_{1'} \lambda^{ \underline{A}'}_{1'} \right)
\ee
which is constant by \eqref{DM2a} and Liouville's theorem, and non-zero by the assumption that $S$ is non-exceptional.

Our renormalized 4-momentum is defined for a (non-exceptional) surface $S$ with $\rho'>0$ by
\be
\label{Pren}
 P_{\rm ren}^{\u{AA}'} = P^{\u{AA}'} + \frac{1}{4\pi} \int_S \frac{2}{\rho'} |\phi_{01}|^2 \bar{\lambda}_1^{\underline{A}} \lambda^{\underline{A}'}_{1'} \
\ee
where $\lambda_{1'}^{\underline{A}'}$ are a pair of solutions of \eqref{DM2a} and $P^{\u{AA}'}$ is the DM 4-momentum given by \eqref{P1}. We define the renormalized DM mass $\varpi$ by
\be
 \varpi^2 \equiv P_{\rm ren}^{\underline{AA}'}  P_{\rm ren}^{\underline{BB}'} \bar{\epsilon}_{\underline{A}\underline{B}}  \epsilon_{\underline{A}'\underline{B}'} 
\ee
with $\epsilon_{\underline{A}' \underline{B}'}$ the inverse of (constant) $\epsilon^{\underline{A}' \underline{B}'}$ defined by \eqref{eps1}. If $\varpi^2 \ge 0$ then we choose $\varpi \ge 0$. 

The difference between $P_{\rm ren}^{\u{AA}'}$ and $P^{\u{AA}'}$ is quadratic in the Maxwell field hence to linear order in perturbations around Minkowski spacetime (treating the Maxwell field as a first order quantity) we have $P_{\rm ren}^{\u{AA}'} = P^{\u{AA}'}$ and $\varpi=M$. Hence $P_{\rm ren}^{\u{AA}'}$ and $\varpi$ inherit the properties of the DM 4-momentum and mass at zeroth order and at linear order. At zeroth order this implies that $P_{\rm ren}^{\u{AA}'}$ and $\varpi$ vanish for a surface in Minkowski spacetime. At linear order, $P_{\rm ren}^{\u{AA}'}$ and $\varpi$ are equal to the 4-momentum and mass of the matter enclosed by $S$, if the matter is viewed as a linear source \cite{Dougan:1991zz}.  

As an example, consider a spherically symmetric spacetime, taking $S$ to be a symmetry $2$-sphere of area-radius $r$. The DM mass for such a surface was calculated in \cite{dougan:1992}. Spherical symmetry implies that $\sigma=0$ and $\rho'$, $\rho$, $\phi_{01}$ are constant on $S$. We can write the definition of the charges \eqref{QPdef} as
\be
\label{QPdefsym}
 Q-iP = \frac{1}{2\pi} \int_S \phi_{01}
\ee
and hence in spherical symmetry we have $\phi_{01} = (Q-iP)/(2r^2)$. We shall use spin-weighted spherical harmonics ${}_{s} Y_{jm}$, see \cite[section 4.15]{Penrose:1985bww}. These harmonics obey \cite[eq 4.15.106]{Penrose:1985bww}
\be
\label{spin_harm}
 \bar{\eth} {}_{s} Y_{jm} = \left[ \frac{(j-s+1)(j+s)}{2r^2} \right]^{1/2} {}_{s-1} Y_{jm}
\ee
$\lambda_{1'}$ has spin-weight $1/2$ so it can be expanded in terms of ${}_{1/2} Y_{jm}$. Equation \eqref{DM2a} reduces to $\bar{\eth}\bar{\eth} \lambda_{1'}=0$. But $\bar{\eth}\bar{\eth} {}_{1/2} Y_{jm}$ vanishes if, and only if, $j=1/2$ (and $m = \pm 1/2$). So two linearly independent solutions of \eqref{DM2a} are $\lambda_{1'}^{\u{1}'} \equiv {}_{1/2} Y_{1/2 \, 1/2}$ and $\lambda_{1'}^{\u{0}'} \equiv {}_{1/2} Y_{1/2 \, -1/2}$. This choice gives
\be
 P_{\rm ren}^{\u{01}'} = P_{\rm ren}^{\u{10}'}=0 \qquad P_{\rm ren}^{\u{00}'} = P_{\rm ren}^{\u{11}'}  =\frac{1}{4\pi \rho'}  \left( \frac{1}{2} + r^2 \rho \rho' + \frac{Q^2+P^2}{2r^2}\right) =\frac{1}{4\pi \rho'}  \left( \frac{m}{r} + \frac{Q^2+P^2}{2r^2} \right)
\ee
where $m$ is the Hawking mass (see e.g. \cite{dougan:1992}). To compute our quasilocal mass $\varpi$ we should first compute $\epsilon^{\underline{A}' \underline{B}'}$ but we can avoid this work by noting that the above expression differs from the standard DM 4-momentum just by the $Q^2+P^2$ term above. Our renormalized DM mass is related to the renormalized DM 4-momentum in exactly the same way as the DM mass is related to the DM 4-momentum. The DM mass in spherical symmetry is just $m$ \cite{dougan:1992} and so it follows from our expression above that the renormalized DM mass coincides with the ``renormalized Hawking mass'':
\be
 \varpi = m + \frac{Q^2+P^2}{2r}
\ee 
For a symmetry sphere in a Reissner-Nordstr\"om spacetime this implies that $\varpi$ equals the ADM mass. 

We can now present our main result in this section:
\begin{theorem}
\label{thm:masscharge}
 Let $\Sigma$ be a smooth, compact, connected, spacelike $3$-surface with compact connected boundary $S = \partial \Sigma$. Assume $\rho'>0$ on $S$ and that the matter fields on $\Sigma$ satisfy the local mass charge inequality \eqref{local_mass_charge} on $\Sigma$. Then $P_{\rm ren}^{\u{AA}'}$ is non-spacelike and $\varpi$ satisfies a quasi-local mass-charge inequality:
\be
\label{masscharge}
 \varpi^2 \ge Q^2 + P^2 
\ee 
If this inequality is saturated then the conclusions (i),(ii),(iii) of theorem \ref{thm:Ipos} hold.
\end{theorem}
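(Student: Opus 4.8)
The plan is to reduce everything to the positivity result already proved in Theorem \ref{thm:Ipos}, by exhibiting the renormalized momentum as a value of the supercovariant functional $\hat{I}_S$. Fix the pair $\lambda_{1'}^{\u{A}'}$ of solutions of \eqref{DM2a} used to build $P_{\rm ren}^{\u{AA}'}$, and let $a_{\u{A}'}, b_{\u{A}'}$ be arbitrary constant spinors. I would construct a Dirac spinor $\epsilon=(\bar\lambda^A,\mu_{A'})$ on $S$ by prescribing the boundary data $\lambda_{1'}=a_{\u{A}'}\lambda_{1'}^{\u{A}'}$ and $\mu_{1'}=b_{\u{A}'}\lambda_{1'}^{\u{A}'}$, and then \emph{defining} $\lambda_{0'},\mu_{0'}$ so that the first holomorphicity conditions \eqref{hol_conds1} hold (note we do not impose \eqref{hol_conds2}, and the resulting $\lambda_{0'}$ is \emph{not} the Dougan--Mason one when $\phi_{01}\neq0$). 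By construction $\epsilon$ solves \eqref{hol_conds1} and is non-zero whenever $(a,b)\neq0$ (since $S$ is non-exceptional), so Theorem \ref{thm:Ipos} applies and gives $\hat{I}_S[\epsilon]\ge0$.

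Next I would evaluate $\hat{I}_S[\epsilon]$ from \eqref{hatISfull} by eliminating $\lambda_{0'},\mu_{0'}$ through \eqref{hol_conds1}. The terms quadratic in $\bar{\eth}\lambda_{1'}$ and in $\rho$ reproduce the Dougan--Mason momentum \eqref{P1} (via \eqref{PIS}), while the $|\phi_{01}|^2|\lambda_{1'}|^2$ and $|\phi_{01}|^2|\mu_{1'}|^2$ pieces produced by the $\sqrt{2}\bar\phi_{0'1'}$ terms of \eqref{hol_conds1} are precisely the $2|\phi_{01}|^2/\rho'$ renormalization of \eqref{Pren}; together these diagonal contributions reassemble into $4\pi P_{\rm ren}^{\u{AA}'}(\bar a_{\u{A}}a_{\u{A}'}+\bar b_{\u{A}}b_{\u{A}'})$. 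The remaining cross terms are
\be
\int_S \frac{2\sqrt{2}}{\rho'}\,{\rm Re}\!\left[ \phi_{01}\left( \mu_{1'}\bar{\eth}\lambda_{1'} - \lambda_{1'}\bar{\eth}\mu_{1'} \right) \right].
\ee
Because both $\lambda_{1'}$ and $\mu_{1'}$ solve \eqref{DM2a}, the combination $\rho'^{-1}(\mu_{1'}\bar{\eth}\lambda_{1'}-\lambda_{1'}\bar{\eth}\mu_{1'})$ equals the \emph{constant} $-a_{\u{A}'}b_{\u{B}'}\epsilon^{\u{A}'\u{B}'}$ by \eqref{eps1}, so using \eqref{QPdefsym} the cross term is $-4\sqrt{2}\,\pi\,{\rm Re}[a_{\u{A}'}b_{\u{B}'}\epsilon^{\u{A}'\u{B}'}(Q-iP)]$. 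Combining with $\hat{I}_S[\epsilon]\ge0$ and dividing by $4\pi$ gives the master inequality, valid for all constant $a,b$:
\be
P_{\rm ren}^{\u{AA}'}\left( \bar a_{\u{A}}a_{\u{A}'} + \bar b_{\u{A}}b_{\u{A}'} \right) \ge \sqrt{2}\,{\rm Re}\!\left[ a_{\u{A}'}b_{\u{B}'}\epsilon^{\u{A}'\u{B}'}(Q-iP) \right].
\ee

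The two conclusions then follow by specializing this inequality. Setting $b=0$ gives $P_{\rm ren}^{\u{AA}'}\bar a_{\u{A}}a_{\u{A}'}\ge0$ for all $a$, so $P_{\rm ren}^{\u{AA}'}$ is a future-directed non-spacelike (real, Hermitian, positive-semidefinite) vector. For the mass-charge bound, I would diagonalize $P_{\rm ren}^{\u{AA}'}$ by an $SU(2)$ rotation of the solution space preserving $\epsilon^{\u{A}'\u{B}'}$ (normalized to the standard symbol), with non-negative eigenvalues $p_0,p_1$, so that $\varpi^2=2p_0p_1$. Since the charge pairing is the symplectic (holomorphic bilinear) $a_{\u{0}'}b_{\u{1}'}-a_{\u{1}'}b_{\u{0}'}$, substituting $b_{\u{1}'}\mapsto\bar b_{\u{1}'}$ (and likewise for the second pairing) converts the master inequality into the statement that two $2\times2$ Hermitian blocks of the form $\bigl(\begin{smallmatrix}p_0 & -\bar c/\sqrt2\\ -c/\sqrt2 & p_1\end{smallmatrix}\bigr)$, $c=Q-iP$, are positive-semidefinite; this is exactly $p_0p_1\ge(Q^2+P^2)/2$, i.e. $\varpi^2\ge Q^2+P^2$. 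If equality holds then a block degenerates, providing a non-zero $(a,b)$ that saturates the master inequality, hence $\hat{I}_S[\epsilon]=0$ for the corresponding non-zero $\epsilon$; the equality clause of Theorem \ref{thm:Ipos} then yields (i), (ii), (iii).

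The main obstacle I anticipate is the bookkeeping in the middle step: one must verify that the $\sqrt{2}\bar\phi_{0'1'}$ terms of \eqref{hol_conds1} distribute so that the diagonal parts land exactly on the $2|\phi_{01}|^2/\rho'$ renormalization term of \eqref{Pren} (with the correct coefficient, independent of how the $\phi$-terms are split between the $\lambda$ and $\mu$ sectors), while the off-diagonal parts collapse, via the constancy of the Wronskian \eqref{eps1}, onto the single monopole integral $\int_S\phi_{01}=2\pi(Q-iP)$. Keeping the numerical factors and the normalization of $\epsilon^{\u{A}'\u{B}'}$ consistent throughout is what pins the final constant to $Q^2+P^2$ rather than a rescaling of it.
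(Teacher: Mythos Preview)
Your proposal is correct and follows essentially the same route as the paper: you build $\epsilon$ from the $\lambda_{1'}^{\u{A}'}$ basis, impose \eqref{hol_conds1} to define $\lambda_{0'},\mu_{0'}$, invoke Theorem \ref{thm:Ipos}, and expand $\hat I_S[\epsilon]$ to obtain precisely the paper's identity \eqref{ISQP} (your ``master inequality'' is that identity rewritten as an inequality, with an inessential overall sign on the charge term that is absorbed by $b\mapsto-b$). The only genuine difference is the closing algebra: the paper extremizes \eqref{ISQP} over $\mu_{\u{A}'}$ to obtain \eqref{lambda_sol} and substitutes back to reach \eqref{ISQP2}, whereas you diagonalize $P_{\rm ren}^{\u{AA}'}$ by an $SU(2)$ rotation and read off $p_0p_1\ge|Q-iP|^2/2$ as a $2\times2$ positive-semidefiniteness condition; both extract the same bound, and both yield a nontrivial kernel of $\hat I_S$ in the equality case, so Theorem \ref{thm:Ipos} delivers (i),(ii),(iii).
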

\begin{proof}
Recall the definition of $\hat{I}_S[\epsilon]$, equation \eqref{hatIS}. Assume that $\epsilon$ satisfies \eqref{hol_conds1}.
From theorem \ref{thm:Ipos} we have $\hat{I}_S[\epsilon] \ge 0$. Using $\eqref{hol_conds1}$ to eliminate $\lambda_{0'}$ and $\mu_{0'}$ gives
\be
   \hat{I}_S[\epsilon] = \int_S \left[\frac{1}{\rho'} \left( |\bar{\eth} \lambda_{1'} + \sqrt{2} \bar{\phi}_{0'1'} \bar{\mu}_1 |^2 +  |\bar{\eth} \mu_{1'} - \sqrt{2} \bar{\phi}_{0'1'} \bar{\lambda}_1 |^2\right) +  \rho \left( |\lambda_{1'}|^2 + |\mu_{1'}|^2 \right) \right] 
\ee
Expanding this out gives
\bea
    \hat{I}_S[\epsilon] &=& \int_S \left[ \frac{1}{\rho'} \left( |\bar{\eth} \lambda_{1'}|^2 + |\bar{\eth} \mu_{1'}|^2 \right) + \left( \rho + \frac{2}{\rho'} |\phi_{01}|^2 \right) \left( |\lambda_{1'}|^2 + |\mu_{1'}|^2 \right) \right]  \nonumber \\
    &+&  {\rm Re} \int_S \frac{2 \sqrt{2}}{\rho'} \phi_{01} \left( \bar{\eth} \lambda_{1'} \mu_{1'} - \bar{\eth} \mu_{1'} \lambda_{1'} \right)
\eea
We now assume that $\lambda_{1'}$ and $\mu_{1'}$ both satisfy \eqref{DM2a}.\footnote{
Note that we do {\it not} replace derivatives with supercovariant derivatives in \eqref{DM2a}. See the remark following this proof.
} Recall that $\lambda^{\u{A'}}_{1'}$ is a basis of solutions for \eqref{DM2a} so we can expand
\be
 \lambda_{1'} = \lambda_{\u{A}'} \lambda^{\u{A'}}_{1'} \qquad \qquad \mu_{1'} = \mu_{\u{A}'} \lambda^{\u{A'}}_{1'} 
\ee 
for constants $\lambda_{\u{A}'}$ and $\mu_{\u{A}'}$. Recall the definition of the renormalized 4-momentum in equation \eqref{Pren}, the constants $\epsilon^{\u{A}'\u{B}'}$ in \eqref{eps1} and the electromagnetic charges in \eqref{QPdefsym}. We obtain
\be
\label{ISQP}
  \frac{1}{4\pi} \hat{I}_S[\epsilon] = P_{\rm ren}^{\u{AA}'} \left( \bar{\lambda}_{\u{A}} \lambda_{\u{A}'} + \bar{\mu}_{\u{A}} \mu_{\u{A}'} \right) + \sqrt 2 {\rm Re} \left(  \epsilon^{\u{A}'\u{B}'} \lambda_{\u{A}'} \mu_{\u{B}'}(Q-iP) \right)
\ee
We know that the LHS is non-negative for all $\lambda_{\u{A}'}$ and $\mu_{\u{A}'}$. Taking $\mu_{\u{A}'} =\lambda_{\u{A}'}$ this shows that the Hermitian matrix $P_{\rm ren}^{\u{AA}'}$ is positive semi-definite, which implies that is causal and future-directed (or zero). Hence $\varpi^2 \ge 0$. 

If $Q=P=0$ then \eqref{masscharge} is trivial so assume $Q-iP \ne 0$. $P_{\rm ren}^{\u{AA}'}$ cannot vanish for otherwise we could choose $\lambda_{\u{A}'}$ and $\mu_{\u{A}'}$ to make the RHS of \eqref{ISQP} negative. Now extremize \eqref{ISQP} w.r.t. $\mu_{\u{A}'}$ to obtain
\be
\label{lambda_sol}
 \lambda_{\u{A}'} = -\frac{\sqrt{2}}{Q-iP}\bar{\epsilon}_{\u{A'B'}} P^{\u{BB'}}_{\rm ren} \bar{\mu}_{\u{B}}
\ee
Substituting this back into \eqref{ISQP} and performing some matrix algebra gives
\be
\label{ISQP2}
 \frac{1}{4\pi} \hat{I}_S[\epsilon] = 
 \left( \frac{\varpi^2}{Q^2+P^2} -1 \right) P_{\rm ren}^{\u{AA}'} \bar{\mu}_{\u{A}} \mu_{\u{A}'}
\ee
The LHS is non-negative and $P_{\rm ren}^{\u{AA}'}$ is non-zero and positive semi-definite. Hence the expression in brackets is non-negative, i.e., \eqref{masscharge} must be satisfied. 
 
 Now assume that \eqref{masscharge} is saturated. If $Q=P=0$ then this gives $\varpi=0$ which implies that $P_{\rm ren}^{\u{AA}'}$ is degenerate and hence has a non-trivial kernel. Let $\lambda^{(0)}_{\u{A}'}$ be a member of this kernel. Then taking $\mu_{\u{A}'}$ proportional to $\lambda^{(0)}_{\u{A}'}$ in \eqref{ISQP} shows that $\hat{I}_S[\epsilon]$ has a non-trivial kernel. Similarly if $Q-iP \ne 0$ then for any $\mu_{\u{A}'}$ if we impose \eqref{lambda_sol} then (from \eqref{ISQP2}) we get an element of the kernel of $\hat{I}_S[\epsilon]$. (In both cases, the kernel has at least 2 complex dimensions.) From Theorem \ref{thm:Ipos} we deduce that (i),(ii),(iii) of that theorem must hold.  
 \end{proof}

A symmetry $2$-sphere in the extremal Reissner-Nordstr\"om spacetime is an example of a surface for which \eqref{masscharge} is saturated. The theorem is not as sharp as one would like because it does not establish that (i), (ii) and (iii) are {\it sufficient} for \eqref{masscharge} to be an equality. This is because if $S$ is a supersymmetric surface then it is not clear that the components $\lambda_{1'}$ and $\mu_{1'}$ of the associated spinor will satisfy \eqref{DM2a} on $S$. In other words there may exist supersymmetric surfaces that do not saturate \eqref{masscharge}. One might think that we could overcome this problem if we replaced \eqref{DM2a} with its supercovariant modification, obtained by substituting $\lambda_{0'}$ and $\mu_{0'}$ given by \eqref{hol_conds1} into \eqref{hol_conds2}. However this gives {\it coupled} equations for $\lambda_{1'}$ and $\mu_{1'}$, so we can't define $\lambda_{1'}^{\u{A}'}$ as before and our definition of $P_{\rm ren}^{\u{AA}'}$ no longer makes sense.  

 \begin{corollary}
 With the same assumptions as theorem \ref{thm:masscharge}, if $S$ is a trapped surface then the inequality \eqref{masscharge} is strict.
 \end{corollary}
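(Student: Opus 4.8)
The plan is to argue by contradiction, using the saturation characterisation already established in Theorem \ref{thm:masscharge} together with Lemma \ref{lem:trapped}. First I would observe that the hypotheses of the corollary are consistent with those of Theorem \ref{thm:masscharge}: a trapped surface has $\rho>0$ and $\rho'>0$, so in particular the standing assumption $\rho'>0$ on $S$ is automatically satisfied. Thus Theorem \ref{thm:masscharge} applies directly, and we have $\varpi^2 \ge Q^2 + P^2$ together with the statement that equality forces conclusions (i), (ii), (iii) of Theorem \ref{thm:Ipos}.

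Now suppose, for contradiction, that \eqref{masscharge} is \emph{not} strict, i.e.\ that $\varpi^2 = Q^2 + P^2$. Then conclusion (i) of Theorem \ref{thm:Ipos} holds, which says precisely that $S$ is a supersymmetric surface (with some non-trivial spinor $\epsilon$). At this point I would note that I need only conclusion (i); conclusions (ii) and (iii) about the extension of $\epsilon$ to $\Sigma$ and the matter content play no role in the argument.

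The final step is to invoke Lemma \ref{lem:trapped}, which asserts that a supersymmetric surface cannot be trapped. Concretely, since $S$ is supersymmetric it satisfies \eqref{hol_conds1}, hence the expression \eqref{hatISfull} holds, and if $S$ were trapped ($\rho>0$, $\rho'>0$) then the non-vanishing of $\epsilon$ (Lemma \ref{lem:nonzero}) would give $\hat{I}_S[\epsilon]>0$, contradicting Lemma \ref{lem:vanishI}. This contradicts our assumption that $S$ is trapped, so the saturation case is excluded and the inequality \eqref{masscharge} must be strict.

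I do not expect any genuine obstacle here: the corollary is an immediate combination of the equality case of Theorem \ref{thm:masscharge} with the incompatibility of supersymmetry and trappedness recorded in Lemma \ref{lem:trapped}. The only point requiring a moment's care is the verification that a trapped surface indeed meets the hypotheses of Theorem \ref{thm:masscharge}, namely that $\rho'>0$, which is part of the definition of trapped and therefore causes no difficulty.
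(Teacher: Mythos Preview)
Your proposal is correct and follows essentially the same approach as the paper: assume saturation, invoke conclusion (i) of Theorem~\ref{thm:Ipos} via Theorem~\ref{thm:masscharge} to deduce that $S$ is supersymmetric, then appeal to Lemma~\ref{lem:trapped} for the contradiction. Your added remark that trappedness already guarantees the hypothesis $\rho'>0$ is a helpful clarification but not a substantive departure.
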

 \begin{proof}
 If the inequality \eqref{masscharge} is saturated then theorem \ref{thm:masscharge} tells us (point (i)) that $S$ is a supersymmetric surface. Lemma \ref{lem:trapped} tells us that $S$ is not trapped.
 \end{proof}
  
\section{Discussion}

\label{sec:discuss}

We have proved a third law of black hole mechanics for supersymmetric black holes in Einstein-Maxwell theory coupled to charged matter obeying the local mass-charge inequality. This third law asserts that such black holes cannot have compact interior and hence cannot be formed in gravitational collapse. This result leaves open the possibility that a supersymmetric black hole might form in {\it infinite time} e.g. gravitational collapse might produce a black hole that approaches extremal Reissner-Nordstr\"om asymptotically at late time. Such solutions could be ``critical'' in the sense discussed by Kehle and Unger \cite{Kehle:2024vyt}, i.e., separating solutions that form black holes from solutions that are future causally geodesically complete. It would be interesting to know whether such asymptotically extremal solutions exist for matter that obeys the local mass-charge inequality.\footnote{Asymptotically extremal Reissner-Nordstr\"om solutions have been constructed numerically for uncharged scalar field matter \cite{Murata:2013daa}, which obviously obeys the local mass-charge inequality. However, charged black holes cannot form in gravitational collapse in this model.}

Our third law holds only for supersymmetric black holes. But not all extremal black holes are supersymmetric e.g. extremal Kerr is not. Kehle and Unger have conjectured \cite{Kehle:2022uvc,Kehle:2023eni} that it is possible to form an extremal Kerr black hole from regular {\it vacuum} initial data (i.e. gravitational collapse of gravitational waves). If this conjecture is correct then, since it does not involve matter, it will hold also for the class of theories that we have considered. So the third law would be violated by extremal Kerr black holes even if it is not violated by extremal Reissner-Nordstr\"om black holes. 

There is a different version of the third law of black hole mechanics which asserts that the entropy of a black hole should vanish at extremality. Of course this is violated by classical black hole solutions but it has been argued recently that quantum corrections become large near extremality and have the effect of enforcing this version of the third law \cite{Iliesiu:2020qvm}. However, this effect occurs only for {\it non}-supersymmetric black holes, so supersymmetric black holes still violate this third law \cite{Heydeman:2020hhw}. If the conjecture mentioned in the previous paragraph is correct then this is exactly opposite to the situation for the version of the third law discussed in this paper!

Our results could be generalized in various ways. The results of \cite{Gibbons:1982jg} suggest that theorem \ref{thm:Ipos} might be generalized to allow $\Sigma$ to have additional boundaries at apparent horizons. It also seems likely that our results can be adapted to the case of a negative cosmological constant, through appropriate modification of the supercovariant derivative \cite{Gibbons:1982jg}, and thereby prove a third law for supersymmetric anti-de Sitter black holes. It would also be interesting to investigate extending our results to higher dimensions.     

 \section*{Acknowledgments}
 
 I am very grateful to Piotr Chrusciel, Christoph Kehle, James Lucietti, Lionel Mason, Paul Tod, Claude Warnick and especially Ryan Unger for helpful discussions. This work was partially supported by STFC grants ST/T000694/1 and ST/X000664/1 and by the Institut Henri Poincar\'e (UAR 839 CNRS-Sorbonne Universit\'e), and LabEx CARMIN (ANR-10-LABX-59-01).

\appendix

\section{Existence of a solution of the Sen-Witten equation}

\label{app:spinor}

The proof of Theorem \ref{thm:Ipos} assumes the existence of a solution $\tilde{\epsilon}$ of the (supercovariant) Sen-Witten equation \eqref{senwitten} satisfying the boundary conditions \eqref{wittenbcs}. Here we will present a standard argument to justify this assumption. 

First note, that if a solution of this problem exists then it is unique. The argument is essentially the same as in  \cite{Gibbons:1982fy}: if one had two such solutions then one can take their difference to obtain a solution with $ \tilde{\lambda}_{1'} = \tilde{\mu}_{1'} =0$. Since $\rho'>0$, equation \eqref{hatIS} then gives $\hat{I}_S[\tilde{\epsilon}]  \le 0$ with equality if, and only if, $\tilde{\epsilon}$ vanishes on $S$. However, following the argument in the proof of Theorem \ref{thm:Ipos}, $\hat{I}_S[\tilde{\epsilon}]$ can be converted to a manifestly non-negative integral over $\Sigma$ which is a sum of two non-negative terms, one of which is the norm of $h^a_b \hat{\nabla}_a \tilde{\epsilon}$ (this is the Witten identity). Hence it follows that $\tilde{\epsilon}$ must vanish on $S$ and $\tilde{\epsilon}$ satisfies $h^a_b \hat{\nabla}_a \tilde{\epsilon}=0$ on $\Sigma$. The argument of Lemma \ref{lem:nonzero} then gives $\tilde{\epsilon} \equiv 0$ on $\Sigma$, establishing uniqueness. 

To establish existence requires more work. As noted in a footnote of \cite{Dougan:1991zz}, one has to check that the adjoint problem, with the adjoint boundary conditions, has trivial kernel. The reason for this can be understood heuristically as follows. First convert the inhomogeneous boundary conditions \eqref{wittenbcs} into homogeneous ones by shifting $\tilde{\epsilon} \rightarrow \tilde{\epsilon} + \tilde{\epsilon}_S$ where $\tilde{\epsilon}_S$ is any smooth spinor on $\Sigma$ satisfying \eqref{wittenbcs}. This converts the Sen-Witten equation into an inhomogeneous equation with homogeneous boundary conditions:
\be
\label{sw_inhom}
\hat{\cal D} \tilde{\epsilon} \equiv \gamma^b h^a_b \hat{\nabla}_a \tilde{\epsilon}=f \qquad \qquad \tilde{\lambda}_{1'} = \tilde{\mu}_{1'} =0 \; {\rm on} \; S
\ee
with $f = -\gamma^b h^a_b \hat{\nabla}_a \tilde{\epsilon}_S$. Now introduce an inner product for Dirac spinors on $\Sigma$ defined by
\be
 (\eta,\epsilon) = \int_\Sigma \eta^\dagger \epsilon
\ee
To solve \eqref{sw_inhom} we choose $\tilde{\epsilon}$ to minimize $||\hat{\cal D} \tilde{\epsilon} -f||^2$ where $||\ldots ||$ is the norm defined by our inner product. Taking the variation w.r.t. $\tilde{\epsilon}$ and integrating by parts this shows that $\eta \equiv \hat{\cal D} \tilde{\epsilon} -f$ must satisfy
\be
\label{adjoint}
 \hat{\cal D}^\dagger \eta = 0 
\ee
with $\eta^\dagger M_a \gamma^a \delta \tilde{\epsilon}=0$ on $S$ for any $\delta \tilde{\epsilon}=0$ satisfying the boundary conditions of \eqref{sw_inhom}, where $M_a$ is a unit normal to $S$ within $\Sigma$. Choosing an orthonormal basis $\{e_\mu\}$ such that $M_a$ points in the $3$-direction and $e_0$ is normal to $\Sigma$, the boundary condition on $\tilde{\epsilon}$ can be written $(\gamma^0 \pm \gamma^3)\tilde{\epsilon}=0$ for some choice of the sign $\pm$ (this follows e.g. from results in \cite{Gibbons:1982jg}). We need $\eta^\dagger \gamma^3 \delta \tilde{\epsilon}$ to vanish whenever $\delta \tilde{\epsilon}$ satisfies this boundary condition. Writing
$
 \eta^\dagger \gamma^3 \delta \tilde{\epsilon} = \frac{1}{2}  \eta^\dagger [(\gamma^0 + \gamma^3) - (\gamma^0-\gamma^3) ]\delta \tilde{\epsilon} 
$
show that this is equivalent to $(\gamma^0 \pm \gamma^3)\eta=0$, i.e., the boundary condition is self-adjoint. 
 
We now need to show that, with this boundary condition, \eqref{adjoint} admits only the trivial solution $\eta \equiv 0$, which then implies that $\tilde{\epsilon}$ satisfies \eqref{sw_inhom}. For \cite{Dougan:1991zz} this was straightforward because the operator in that case was self-adjoint, so $\eta \equiv 0$ follows from exactly the same argument used above to establish uniqueness of $\tilde{\epsilon}$. However, this argument does not work with an electromagnetic field because $\hat{\cal D}$ is not self-adjoint. In more detail we have \cite{Bartnik:2003yg}
\be
\hat {\cal D} = \gamma^i D_i - \frac{1}{2} K \gamma^0 + \frac{1}{2} E_i \gamma^i \gamma^0 -\frac{1}{4} \epsilon_{ijk} B_i \gamma^j \gamma^k
\ee
where $D$ is the intrinsic covariant derivative on $\Sigma$, $K = K^a_a$ where $K_{ab}$ is the extrinsic curvature of $\Sigma$ and $E_i$ and $B_i$ are the electric and magnetic fields on $\Sigma$. Using the fact that $\gamma^0$ is hermitian and $\gamma^i$ is antihermitian we see that taking the adjoint preserves the first three terms but reverses the sign of the final term. Hence $\hat{\cal D}^\dagger$ differs from $\hat{\cal D}$ by reversing the magnetic field on $\Sigma$ \cite{Bartnik:2003yg}. Only if the magnetic field on $\Sigma$ vanishes is $\hat {\cal D}$ self-adjoint (so existence in this case is proved). 

Fortunately there is a simple way around this problem in the case of interest to us, i.e., where $\Sigma$ is a compact surface with boundary $S$.\footnote{I am grateful to Ryan Unger for this argument.}
 We can view $\hat{\cal D}$ as a map from a certain Hilbert space (a Sobolev space $H^1$ of spinors satisfying the homogeneous boundary condition of \eqref{sw_inhom}) to another Hilbert space (square integrable spinors on $\Sigma$). If we write $\hat{\cal D} = {\cal D} + \Lambda$ (where ${\cal D}$ is the usual Sen-Witten operator and $\Lambda$ is linear in the Maxwell field) then compactness of $\Sigma$ implies that, viewed as an operator mapping between these spaces, $\Lambda$ is compact. One can then apply a standard result of Fredholm theory to deduce that the index (dimension of kernel minus dimension of cokernel) of $\hat{\cal D}$ is the same as that of ${\cal D}$, which vanishes by the above arguments. Since we have shown that $\hat{\cal D}$ has trivial kernel, it follows that it also has trivial cokernel, which establishes existence of a (unique) solution of \eqref{sw_inhom}.  

This argument works because $\Sigma$ is compact (with boundary $S$). The argument of \cite{Gibbons:1982fy} also assumes the existence of a solution of \eqref{senwitten}, but in this case $\Sigma$ is non-compact, which makes things more complicated. Theorem 11.9 of  \cite{Bartnik:2003yg} claims to justify this assumption. The approach is to apply the Witten identity to $\hat{\cal D}^\dagger$ in the same way that it was used for $\hat{\cal D}$ to establish uniqueness of $\tilde{\epsilon}$. To do this, one has to know that the contribution to this identity coming from the energy-momentum tensor and currents is non-negative. These contributions arise from using the Einstein-Maxwell constraint equations to eliminate certain terms involving the Einstein tensor and Maxwell field. However, reversing the magnetic field on $\Sigma$ changes these terms, so the energy-momentum tensor and currents appearing in the Witten identity for $\hat{\cal D}^\dagger$ are no longer the same as those in the original ``physical'' spacetime. This is not a problem provided that they still satisfy the local mass charge inequality \eqref{local_mass_charge}, which guarantees that they have a good sign. The proof in \cite{Bartnik:2003yg} appears to assume that reversing the sign of $B_i$ simply reverses the sign of $\tilde{J}_0$ and possibly also $T_{0i}^{(m)}$ (in the basis adapted to $\Sigma$), which would preserve the local mass-charge inequality. Unfortunately this seems to be incorrect: the momentum constraint takes the form \cite{Bartnik:2003yg}
\be
 T_{0i}^{(m)} \propto 2 D_j (K^i_j - K \delta^i_j) + 4 \epsilon_{ijk} E_j B_k
\ee
so $T_{0i}^{(m)}$ does not transform simply under a reversal of the sign of $B_i$. One could overcome this difficulty simply by adding an extra assumption that not only should matter respect the local mass charge inequality in the physical spacetime, but also in the unphysical spacetime resulting from reversing $B_i$ (which amounts to shifting $T_{0i}^{(m)}$ by a multiple of  $\epsilon_{ijk} E_j B_k$). However this seems unsatisfactory. Note that this problem does not arise if the Poynting vector vanishes, i.e., if the electric and magnetic fields are parallel on $\Sigma$ so in this case the existence proof is valid. 

\section{Proof of Lemma \ref{lem:charged_dust}}

\label{app:dust}

We will prove the result holds at an arbitrary point $p$. 

First consider the case where $X^a$ is null at $p$ so $V=0$ at $p$. Then \eqref{Sigma_matter} reduces to $N^a T_{ab}^{(m)} X^b=0$ and so $T_{ab}^{(m)} X^b$ must be spacelike or zero. But  it cannot be spacelike because $T_{ab}^{(m)}$ satisfies the dominant energy condition (DEC). Hence $T_{ab}^{(m)} X^b=0$. Contract this equation with an arbitrary causal vector $Y^a$ to deduce that $T_{ab}^{(m)} Y^a$ is either spacelike or null and proportional to $X_b$. But it cannot be spacelike (DEC) so $T_{ab}^{(m)} Y^a = \omega X_b$ for some $\omega$. Linearity gives $\omega = \omega_a Y^a$ so $T^{(m)}_{ab} = \omega_a X_b$. Antisymmetrizing gives $\omega_a = \chi X_a$ for some $\chi$. The weak energy condition (first equation of \eqref{local_mass_charge}) gives $\chi \ge 0$. The second equation of \eqref{local_mass_charge} now has vanishing LHS so $J_a Z^a = \tilde{J}_a Z^a = 0$ for arbitrary causal $Z^a$ hence $J_a  = \tilde{J}_a  = 0$ and \eqref{Sigma_matter2} is proved. 

Next consider the case where $X^a$ is timelike at $p$. Let $i_N T^b = N^a T^{(m)b}_a$, which is causal or zero (DEC). Also write $V=V_1 + i V_2$. 
From \eqref{Sigma_matter} we obtain 
\be
\label{JJtsq}
 \left(V_1  J \cdot N - V_2 \tilde{J} \cdot N \right)^2 = \left( X \cdot i_N T \right)^2 \ge X^2 (i_N T)^2 \ge (V_1^2 + V_2^2) \left((J \cdot N)^2 + (\tilde{J} \cdot N)^2 \right)
\ee
where we used Cauchy-Schwarz in the first inequality and \eqref{local_mass_charge} along with $X^2 = |V|^2$ in the second inequality. Rearranging gives 
$0 \ge (V_2 J \cdot N + V_1 \tilde{J} \cdot N)^2$ and hence we can write
\be
 J \cdot N = \chi V_1 X \cdot N \qquad \tilde{J} \cdot N = -\chi V_2 X \cdot N
\ee
for some $\chi$. Substituting this back into \eqref{JJtsq} the LHS and RHS are equal so both inequalities are saturated. Saturation of the first (Cauchy-Schwarz) inequality implies $i_N T^a = \alpha X^a$ for some $\alpha$ and plugging this into \eqref{Sigma_matter} gives $\alpha = \chi X \cdot N$ so we've shown
\be
 T^{(m)}_{ab} N^b = \chi X \cdot N X_a
\ee
Contracting with $N^a$ and using the weak energy condition (first inequality of \eqref{local_mass_charge}) gives $\chi \ge 0$. Now pick an orthonormal basis $\{e_0,e_i\}$ with $e_0^a = N^a$ so the above results are
\be
 T^{(m)} _{0 \mu} = \chi X_0 X_\mu \qquad J_0 = \chi V_1 X_0 \qquad \tilde{J}_0 = -\chi V_2 X_0
\ee
In this basis consider the second inequality of \eqref{local_mass_charge}. The coefficient of $(Z^0)^2$ in this equation must be zero because if we set $Z^i=0$ then we have $Z^a \propto N^a$ and the inequality becomes an equality by the above results. Next the coefficient of $Z^0 Z^i$ in this equation must vanish, for otherwise we could violate the inequality by taking $Z^0 \rightarrow \infty$ and flipping the sign of $Z^i$ if necessary. Vanishing of this coefficient gives
\be
\label{Zi}
 T^{(m)}_{00} T^{(m)}_{0i} - T^{(m)}_{0j} T^{(m)}_{ij} = J_0 J_i + \tilde{J}_0 \tilde{J}_i \qquad \Leftrightarrow \qquad \chi \left( \chi X_0^2 X_i - X_j T_{ij}^{(m)} - V_1 J_i + V_2 \tilde{J}_i \right) = 0
\ee 
The second inequality of \eqref{local_mass_charge} now reduces to (after substituting for $T^{(m)}_{0i}$)
\be
\label{ZZ}
 \left( \chi^2 X_0^2 X_i X_j - T^{(m)}_{ki} T^{(m)}_{kj}\right) Z^i Z^j \ge (J_i Z^i)^2 + (\tilde{J}_i Z^i)^2 
\ee
Choose $Z^i$ such that $X_i Z^i = 0$. Then the LHS is non-positive and the RHS is non-negative so both  must vanish. Hence we must have
\be
 T^{(m)}_{ij} = t X_i X_j \qquad J_i = j X_i \qquad \tilde{J}_i  = \tilde{j} X_i
\ee
for some $t,j, \tilde{j}$. If $X_i=0$ then we are done so assume $X_i \ne 0$. Substituting back into \eqref{ZZ}, and taking $Z^i = X^i$ gives
\be
\label{ineq1}
 \chi^2 X_0^2 - t^2 X_i X_i \ge j^2 + \tilde{j}^2
\ee
If $\chi=0$ then this gives $t=j=\tilde{j}=0$ and we are done so assume $\chi >0$.  Equation \eqref{Zi} reduces to
\be
\label{eq2}
  \chi X_0^2 - t X_i X_i = j V_1 - \tilde{j} V_2  
\ee
Using Cauchy-Schwarz on the RHS and rearranging gives
\be
 \sqrt{ j^2 + \tilde{j}^2} \ge \frac{  \chi X_0^2 - t X_i X_i}{\sqrt{V_1^2 + V_2^2}}
\ee
Substituting this into the RHS of \eqref{ineq1} and using $V_1^2 +V_2^2 = X_0^2 - X_i X_i$ gives
\be
-(t-\chi)^2 X_0^2 X_i X_i \ge 0
\ee
and hence $t=\chi$. This inequality is then saturated, which implies that the Cauchy-Schwarz inequality must be saturated so $j = \beta V_1$, $\tilde{j} = -\beta V_2$ for some $\beta$ and then \eqref{eq2} fixes $\beta = \chi$. This concludes the proof.

\end{document}